\documentclass[journal]{IEEEtran}
\IEEEoverridecommandlockouts
\usepackage{cite}
\usepackage{amsmath,amssymb,amsfonts}
\allowdisplaybreaks
\usepackage{algpseudocode}
\usepackage{graphicx}
\usepackage{textcomp}
\usepackage[linesnumbered,ruled,vlined]{algorithm2e}
\usepackage{xcolor}
\usepackage{subcaption}
\usepackage{amsthm}
\usepackage{booktabs}
\usepackage{threeparttable}
\usepackage{framed}
\usepackage{float}
\usepackage{orcidlink}
\PassOptionsToPackage{hyphens}{url}\usepackage{hyperref}
\hypersetup{
  pdftitle={Incorporating Bounded Rationality into Electric Vehicle Highway Charging Decisions: A Bayesian Game Analysis},
  pdfauthor={Huanyu Yan; Xiaoying Tang},
  pdfkeywords={Electric vehicle, game theory, prospect theory, charging station, Nash equilibrium, bounded rationality, Bayesian game, range anxiety, highway charging, charging behavior prediction},
  colorlinks=true,
  linkcolor=blue,
  citecolor=blue,
  urlcolor=blue
}
\newtheorem{theorem}{Theorem}
\newtheorem{lemma}{Lemma}

\usepackage{parskip}
\usepackage{enumitem}

\usepackage{titlesec}
\titlespacing*{\section}
  {0pt}{5pt}{2pt}
\titlespacing*{\subsection}
  {0pt}{2pt}{1pt}

\def\BibTeX{{\rm B\kern-.05em{\sc i\kern-.025em b}\kern-.08em
    T\kern-.1667em\lower.7ex\hbox{E}\kern-.125emX}}

\newcommand{\coderepo}{\href{https://github.com/T-Lab-CUHKSZ/-IOTJ25-Incorporating-Bounded-Rationality-into-Electric-Vehicle-Highway-Charging-Decisions}{T-Lab-CUHKSZ/IOTJ25 on GitHub}}

\begin{document}
\title{Incorporating Bounded Rationality into Electric Vehicle Highway Charging Decisions: A Bayesian Game Analysis\\
\thanks{
\par This is the author's version accepted for publication in \textit{IEEE Internet of Things Journal}, vol. 12, no. 11, pp. 15249--15260, 2025. The final published version is available at \url{https://doi.org/10.1109/JIOT.2025.3530449}. The source code is publicly available at \coderepo.
\par This work was supported in part by the National Natural Science Foundation of China under
Grant 62001412; in part by the Shenzhen Institute of Artificial Intelligence
and Robotics for Society; in part by the Shenzhen Key Laboratory of
Crowd Intelligence Empowered Low-Carbon Energy Network under Grant
ZDSYS20220606100601002; in part by the Shenzhen Stability Science
Program 2023; and in part by the Guangdong Provincial Key Laboratory
of Future Networks of Intelligence under Grant 2022B1212010001. This
article was presented in part at the 14th ACM International Conference
on Future Energy Systems and was published in its proceedings \cite{yan2023incorporating}. (Corresponding author: Xiaoyu Tang.).
\par The authors are with the School of Science and Engineering, The Chinese University of Hong Kong, Shenzhen, Guangdong, 518172, P.R. China, and the Shenzhen Institute of Artificial Intelligence and Robotics for Society, and the Shenzhen Key Laboratory of Crowd Intelligence Empowered Low-Carbon Energy Network. (e-mail: huanyuyan@link.cuhk.edu.cn;tangxiaoying@cuhk.edu.cn).
}
}

\author{Huanyu~Yan~\orcidlink{0000-0001-9619-066X},~\IEEEmembership{Student Member,~IEEE},
        and~Xiaoying~Tang~\orcidlink{0000-0003-3955-1195},~\IEEEmembership{Member,~IEEE}}

\maketitle

\begin{abstract}
Electric vehicles (EVs) represent a critical intelligent terminal within the Internet of Things (IoT). Despite the year-on-year growth in EV penetration, the highway driving experience still requires improvement. Accurate prediction of EV highway charging behavior is crucial to addressing this issue. This paper introduces a novel bounded rationality framework to analyze highway charging decisions. Specifically, we utilize prospect theory to capture the tendency of drivers to reserve more electricity than theoretically necessary. We then propose a Bayesian game in which EV drivers, unaware of others’ decisions, aim to minimize costs, including range anxiety, charging fees, and queuing time. To gain insights into the game, we prove the existence and uniqueness of the Bayesian Nash Equilibrium in two practical scenarios. Our numerical experiments, based on real-life data, demonstrate that drivers’ risk aversion tendency significantly influence EV charging decisions, charging demand, queuing lengths at charging stations, and the departure rate on the highway network. Furthermore, our strategy reduces \textcolor{black}{cumulative EV cost and CSs' charging costs} compared to other benchmarks.
\end{abstract}

\begin{IEEEkeywords}
Electric vehicle, game theory, prospect theory, charging station, Nash equilibrium, bounded rationality
\end{IEEEkeywords}

\setlength{\parskip}{0pt}
\setlength{\parindent}{15pt}
\setlength{\abovedisplayskip}{5pt}
\setlength{\belowdisplayskip}{0pt}

\section{Introduction}

Electric vehicles (EVs) represent a critical intelligent terminal within the Internet of Things (IoT), bridging the transportation and power networks \cite{yu2014phev}. Despite the increasing penetration of EVs over recent years, long-distance driving remains a significant challenge, particularly due to severe congestion at charging stations (CSs) \cite{Longqueuingtimenews_China, Longqueuingtimenews_Australia}. Addressing this congestion and improving the highway driving experience for EV users necessitates a thorough analysis of EV owners’ charging decisions. This analysis is a foundational step towards resolving broader issues related to EV highway charging, including optimal charging station placement, the development of effective pricing strategies, and the efficient scheduling of EV charging.

\begin{figure*}[t]
  \centering
  \includegraphics[width=\linewidth]{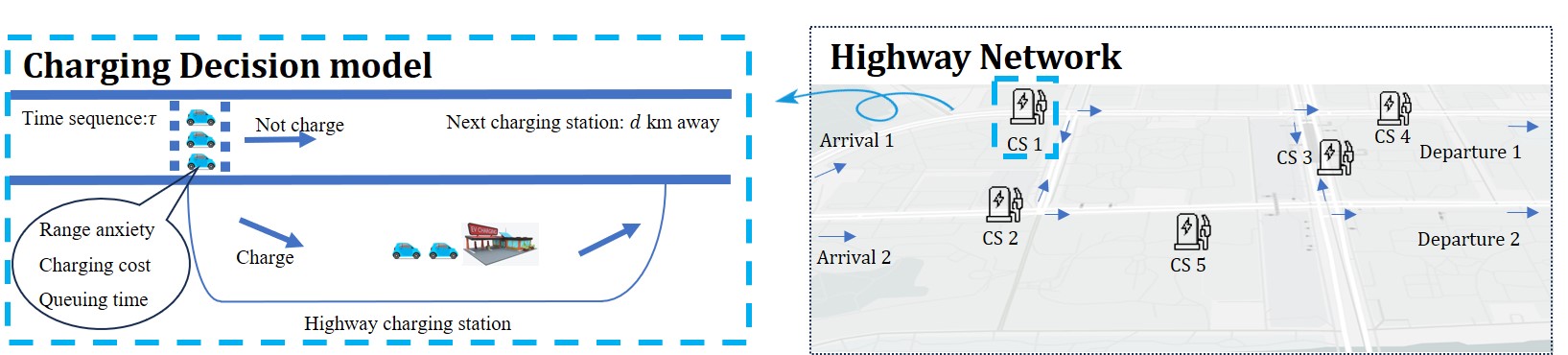}
  \caption{System overview of bounded rationality based highway charging Bayesian game: EV owners make charging decisions at each CS based on known road network information and the behavior analysis of other vehicles. These decisions impact charging demand, queue lengths at charging stations, departure rates, and average travel times on the road network.}
  \label{fig:system}
\end{figure*}

Despite behavior of EV charging has been extensively studied in the literature \cite{silva2024stochastic,hu2019modeling,yin2024research,williams2024driving,yang2016predicting,hu2018analyzing,xu2017joint}, numerous studies \cite{williams2024driving,silva2024stochastic,yin2024research} do not account for bounded rationality, which has been shown to significantly influence highway charging behavior \cite{yang2016predicting,xu2017joint,hu2019modeling}. Among the works that consider bounded rationality, \color{black} Xu \textit{et al.} analyzed data on EV charging preferences in Japan \cite{xu2017joint}, revealing that a user's bounded rationality behavior, particularly range anxiety, leads owners to charge more frequently than necessary. However, their study did not model bounded rationality in charging decisions. Additionally, \cite{yang2016predicting} and \cite{hu2018analyzing} proposed residual models on bounded rationality. In these models, drivers charge their vehicles if the state-of-charge is below the sum of theoretical need and constant redundancy. The uniform value of redundancy across all vehicles lack refined modeling for different users. Therefore, there remains a need for accurate modeling of bounded rationality in EV charging. \color{black}

Prospect theory is a prominent framework for portraying finite rationality and has been applied in various engineering models \cite{wang2015load,li2012prospects,liao2019prospect,sanjab2017prospect,saad2017hardware,mediwaththe2018game}. For instance, Wang \textit{et al.} leveraged prospect theory to explore grid customers’ willingness to participate in demand-side management \cite{wang2015load}. Similarly, Liao \textit{et al.} applied prospect theory in the design of a privacy-preserving mechanism to analyze individuals’ decisions to contribute data when confronted with privacy concerns \cite{liao2019prospect}. Nevertheless, the utility designs in these studies differ markedly from the highway charging scenario and thus cannot be directly applied. \color{black} Hu \textit{et al.} applied prospect theory to capture the bounded rationality of one single EV in \cite{hu2019modeling}, where the EV calculates the decision variable at the next station with prospect theory and compares it to a threshold based on remaining charging opportunities. Charging occurs when this variable exceeds the threshold. However, this model only considers individual paths, ignoring the impact of other EVs' decisions on queuing congestion. This oversight means the model may be inaccurate in systems involving multiple vehicles, as the charging decisions of other EVs affect queuing lengths and, in turn, individual charging decisions. To address this gap, we propose a novel prospect theory-based model that analyzes charging decisions involving multiple heterogeneous EVs in a highway road network with continuous traffic flow, aiming to minimize charging costs, queuing time, and anxiety from bounded rationality.
\color{black}

Applying prospect theory to heterogeneous EV systems presents several challenges. The utility function in prospect theory is derived from real-life experiments, and its non-linearity and exponential parameters complicate the analysis of mathematical properties, such as the existence and uniqueness of equilibrium. Additionally, in the highway charging scenario, a critical parameter—the state-of-charge of each EV—is known only to the driver. This lack of information further complicates model development. Our main contributions are summarized as follows:

\begin{enumerate}[topsep=0pt, partopsep=0pt]
    \item \textbf{A Novel Bounded Rational Framework of EV Highway Charging Behaviors:} In this paper, we propose a novel prospect theory-based framework to analyze the bounded rationality of EVs’ highway charging behaviors. We capture the bounded rational phenomenon that drivers often leave more power than they theoretically need to ensure safety on the highway. To model the charging interactions among EVs, we introduce a prospect theory-based game, considering drivers’ range anxiety, charging prices, and queuing times.
    \item \textbf{Theoretical Analysis on Equilibrium Properties:} To analyze the insights, we rigorously prove the mathematical form of Bayesian Nash Equilibrium. Then we prove the existence and the uniqueness of the Bayesian Nash equilibrium under two practical scenarios, where with and without the EV destination distribution knowledge. Algorithms are designed to solve the charging decisions based on the equilibrium properties.
    \item \textbf{Risk Aversion Effects on Highway Network:} Using the real highway arrival data, we experimentally verified that the EV charging decisions, the charging demand and queuing length of charging stations, and the departure rate of highway network differ significantly under the different drivers' risk aversion tendencies. Further, we used our method to solve the optimal power purchase strategy for charging stations in the day-ahead market, which successfully reduces the power purchase cost. This indicates that our model serve as a foundational step towards addressing further EV highway charging issues, including the selection of charging station locations, scheduling of EV charging, pricing strategies, etc.
\end{enumerate}

\section{Bounded Rational Highway Charging Model}
\label{sec: model}

We consider a highway charging decision problem as depicted in Figure~\ref{fig:system}. A day’s arrivals are divided into a continuous sequence of small time periods. At any time $\tau$, a small segment of EVs arrives at a service area with a charging station. The drivers are aware that the distance to the next service area with a charging station is $d$. Meanwhile, some EVs have already arrived and are queuing, with the estimated queuing time for these vehicles denoted by $t_0$. For the EVs in this segment, drivers decide whether to charge based on the state-of-charge (SoC) of their vehicles. If the drivers choose to charge, they may face potentially long queuing times and high charging fees, as some stations may increase prices when demand is high. Conversely, if the drivers choose not to charge, they may experience range anxiety en route to the next charging station.

In this paper, we analyze the highway charging decision problem considering the queuing time, the charging fee, and the range anxiety. The details are as follows.

\subsection{Range Anxiety}

Range anxiety is an important reason for bounded rationality. In this section, we introduce a range anxiety model, grounded in the framing effect of Prospect Theory \cite{kahneman1979prospect}. Prospect theory, a behavioral model, elucidates decision-making processes under risk, highlighting general risk aversion and the influence of expectations on perception. The framing effect, a cognitive bias, posits that decision-making is swayed by the framing of outcomes relative to expectations, which serve as reference points. Deviations from these reference points significantly impact cognition. As depicted in Figure~\ref{fig:framing_effect}, cognition escalates more rapidly as values diverge from the reference point, with a steeper curve for negative values, reflecting risk aversion.

In the highway charging scenario, we define two thresholds to capture different drivers’ frames: $SoC_{s,i}$ and $SoC_{d,i}$. $SoC_{s,i}$ is the safe threshold for driver $i$, who perceives that an SoC exceeding $SoC_{s,i}$ ensures safe arrival at the next charging station, even under worst-case conditions. Conversely, $SoC_{d,i}$ is the danger threshold, below which driver $i$ deems arrival at the next station impossible. Given prior research indicating a near-linear decrease in SoC with travel distance at constant speed \cite{xu2020mitigate,pelletier2017battery,xu2019fleet}, we model the relationship between these thresholds and distance $d$ as follows:
\begin{equation} \label{eq:dsi}
\small
  {SoC}_{s,i}=\left\{
  \begin{aligned}
    & \frac{1}{d_{s,i}}d_i, & d_i<d_{s,i}, \\
    & 1, & d_i \geq d_{s,i}; \\
  \end{aligned}
  \right.
\end{equation}
\begin{equation} \label{eq:d0i}
\small
  {SoC}_{d,i}=\left\{
  \begin{aligned}
    & \frac{1}{d_{0,i}}d_i, & d_i<d_{0,i}, \\
    & 1, & d_i \geq d_{0,i}, \\
  \end{aligned}
  \right.
\end{equation}
where $d_{s,i}$, $d_{0,i}$ is the absolutely safe and theoretical maximum distance the vehicle can travel when fully charged. Figure~\ref{fig:thresholds} provides an illustrative example. Consider an EV driver who estimates the maximum and minimum distances that a fully charged EV can travel as 600km and 400km, respectively. At a distance of 100km, his $SoC_s$ is 25\% and $SoC_d$ is around 16.7\%. Utilizing these thresholds, we formulate the range anxiety model $A_i$ as follows:

\begin{figure}[t]
    \centering
    \begin{subfigure}{0.49\columnwidth}
        \centering
        \includegraphics[width=\linewidth]{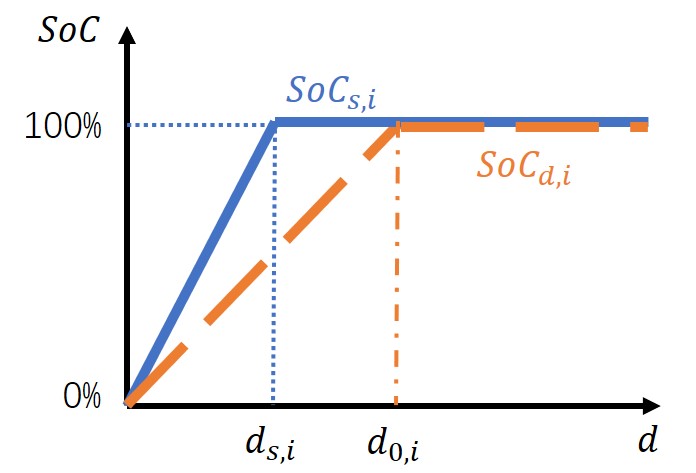}
        \caption{Two thresholds.}
        \label{fig:thresholds}
    \end{subfigure}
    \hfill
    \begin{subfigure}{0.49\columnwidth}
        \centering
        \includegraphics[width=\linewidth]{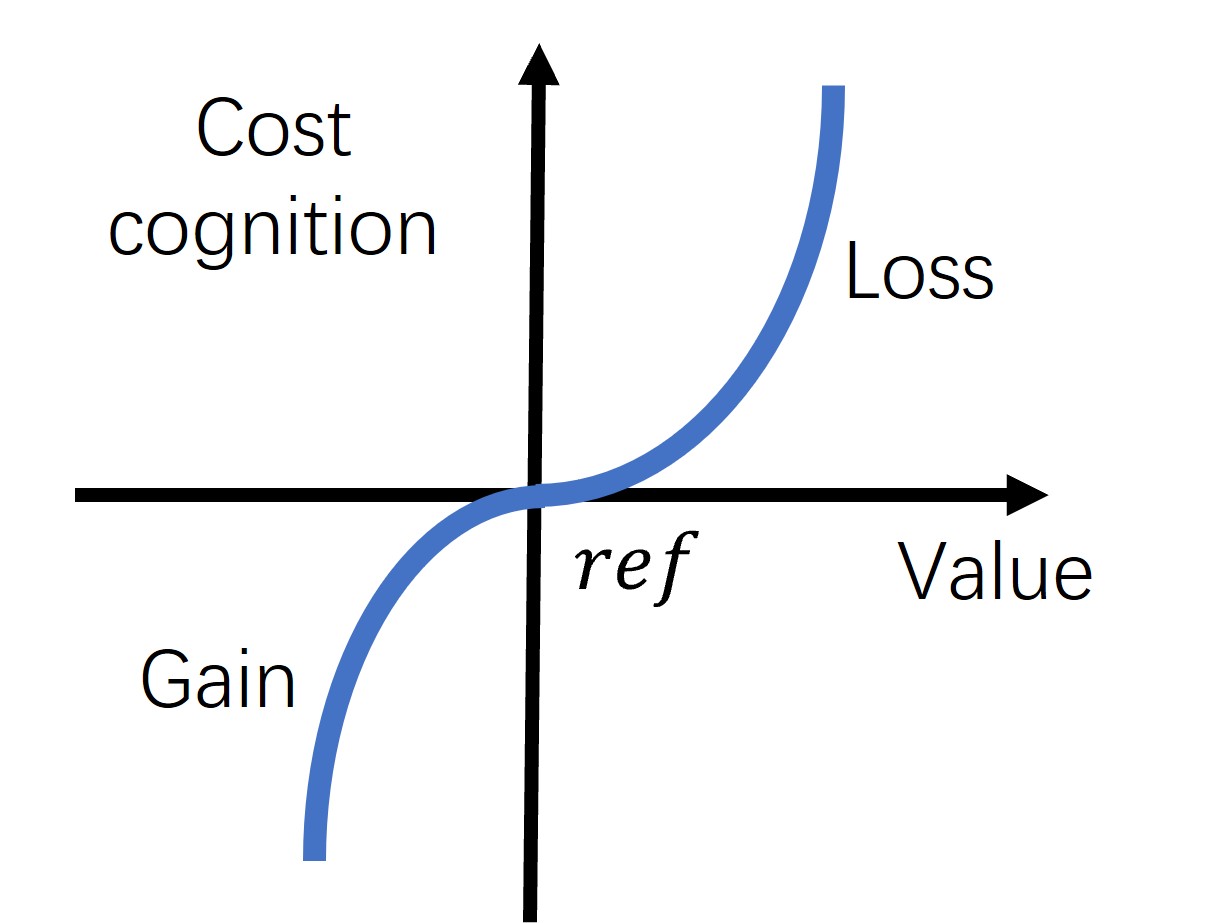}
        \caption{\textcolor{black}{Framing effect utility.}}
        \label{fig:framing_effect}
    \end{subfigure}
    \caption{An intuitive description of two thresholds \textcolor{black}{$SoC_{s,i}$} and \textcolor{black}{$SoC_{d,i}$}, and utility function of framing effect in prospect theory.}
    \label{fig:lambda_varies}
\end{figure}

\begin{equation} \label{eq:a_i}
\small
   A_{i}=\left\{
  \begin{aligned}
    & 0, & \text{if } {SoC}_i \geq {SoC}_{s,i}, \\
    & \lambda_i({SoC}_{s,i}-{SoC}_i)^{\alpha_i}, & \text{if } {SoC}_{d,i} \leq {SoC}_i \leq {SoC}_{s,i}, \\
    & \infty, & \text{otherwise},
  \end{aligned}
  \right.
\end{equation}
where $\lambda_i$ and $\alpha_i$ are the rationality parameters representing the loss aversion degree and the convexity of the value function respectively. For a perfectly rational person, $\lambda = 0$ and $\alpha = 1$, but it's hard for people to be perfectly rational \cite{kahneman1979prospect}. According to survey \cite{tanaka2010risk,xu2020mitigate}, $\lambda>1$ for the individual is more loses (risk) aversion than gains, and $\alpha > 1$ for anxiety escalates faster as the driver’s SoC deviates further from the safe value. \color{black} It also coincides with the conclusion in the literature \cite{saad2016toward,liao2019prospect,tanaka2010risk} that cognitive to loss is convex.  \color{black} Intuitively, $\lambda_i > \lambda_j$ if EV $i$’s owner is more risk-averse than EV $j$’s. The convexity, $\alpha$, mirrors how individuals value outcomes relative to their expectations. For example, if both drivers’ state-of-charge is 5\% below their safe threshold, and driver $i$ exhibits greater nervousness than driver $j$, then $\alpha_i > \alpha_j$. People's rationality parameters can be obtained via survey methods \cite{tanaka2010risk}. This range anxiety model aligns with prior survey-based studies, indicating zero anxiety under sufficient SoC \cite{guo2018battery}, and more pronounced variability as SoC decreases \cite{xu2020mitigate}.

In addition, we denote \textcolor{black}{$\mu_{a,i}$} as the anxiety cost parameter. It presents the cost of a vehicle breaking down on the highway and therefore should be a large number. With above notations, the anxiety utility can be modeled as follows,

\begin{equation}
  \Pi_{a,i}=\left\{
  \begin{aligned}
    & \textcolor{black}{-\mu_{a,i}} A_i & \text{if not charge}, \\
    & 0. & \text{otherwise}. \\
  \end{aligned}
  \right.
\end{equation}

\subsection{Queuing Time}

In this subsection, we addresses the estimation of queuing time for an EV driver. The driver’s estimation takes into account the queuing time of previously arrived EVs and the decisions of other EVs in the same flow segment. We denote the charging power as $\eta$ and number of charging piles as $k$. The charging time for EV $i$ , with a state-of-charge $SoC_i$,  is computed as $\mathord{V_i(\textcolor{black}{SoC_{t,i}}-{SoC}_i)} / \mathord{\eta}$, where the $\textcolor{black}{SoC_{t,i}}$ denotes the target SoC to which the EV will charge \color{black}, $V_i$ denotes the battery volume of EV $i$. \color{black}

We define $c(N)$ as a function mapping the charging decisions of all $N$ EV drivers to the count of EVs opting to charge in this EV flow segment ($0\leq c(N)\leq N$). Since the time period is small, each charging EV is assumed to have an equal probability of occupying any specific position in the queue following the previously queuing EVs. However, as the driver is unaware of the state-of-charge of the other EVs, they \textcolor{black}{are assumed to} estimate each EV’s queuing time as $\mathbb{E}[\mathord{V(\textcolor{black}{SoC_{t,i}}-\textcolor{black}{SoC_i})}/ \mathord{\eta}]$. If EV $i$ is at position $j$ in the queue, its expected waiting time is $(j-1)\mathbb{E}[\mathord{V(\textcolor{black}{SoC_{t,i}}-\textcolor{black}{SoC_i})} / \mathord{\eta}]$. Consequently, for any driver opting to charge, the queuing time can be estimated as follows:

\begin{align} \label{eq:queuing_time}
\small
  t &= \frac{1}{k} (\sum_{j=1}^{c(N)}\frac{1}{c(N)}(j-1)\mathbb{E}[\frac{V(\textcolor{black}{SoC_{t,i}}-\textcolor{black}{SoC_i})}{\eta}]) + t_0, \nonumber \\
  &= \frac{1}{2k} (c(N)-1) \mathbb{E}[\frac{V(\textcolor{black}{SoC_{t,i}}-\textcolor{black}{SoC_i})}{\eta}] + t_0,
\end{align}

where $t_0$ denotes the queuing time caused by queuing EVs before this time sequence. In this paper, we assume $t_0$ is a constant known to all players since some charging station operators will announce the estimated waiting time on APP or website. Besides, EV drivers can observe the waiting queue to estimate $t_0$.

Then, the congestion penalty can be modeled as follows,

\begin{equation}
\small
  \Pi_{q,i}=\left\{
  \begin{aligned}
    & -\textcolor{black}{\mu_{q,i}} \cdot t & \text{if charge}, \\
    & 0, & \text{otherwise}, \\
  \end{aligned}
  \right.
\end{equation}

where $\mu_{q,i}$ is a time-value parameter of EV $i$. Additionally, when the state-of-charge is approaching to the dangerous threshold, we assume the driver will definitely prefer queuing to charge, rather than taking the risk of breaking down on the highway. Mathematically,
\begin{align} \label{eq:mu_a_much_greater}
    lim_{SoC_{i} \rightarrow SoC_{d,i}^+} \textcolor{black}{\mu_{a,i}} A_i \gg \textcolor{black}{\mu_{q,i}} t.
\end{align}

\subsection{Charging Cost}

Since some highway charging stations set a high service price when the demand is large, we consider the charging cost in the drivers' utility. We denote the charging unit price as $P$ CNY/kWh. In this paper, we assume the charging price won't change during EV drivers' decision period. Then, the charging cost for EV $i$ could be written as
\begin{equation}
\small
  \Pi_{c,i}=\left\{
  \begin{aligned}
    & -V_i (\textcolor{black}{SoC_{t,i}}-{SoC}_i) \cdot P, & \text{if charge}, \\
    & 0, & \text{otherwise}. \\
  \end{aligned}
  \right.
\end{equation}

\subsection{Highway Charging Bayesian Game Model}

To sum up, we consider a segment of EV flow including $N$ selfish EV arriving at the service area with a charging station. Each EV driver decides whether to charge at the current charging station. All players have the same action space, denoted by $X_i = \{0,1\}$, $i\in \mathcal{N}$, where action $x_i=1$ denotes EV decides to charge at the current charging station, and action $x_i=0$ denotes not to charge. The strategy of all $N$ players can be represented as a vector $\boldsymbol{x}=[x_i]_{\forall i \in \mathcal{N}} \in \boldsymbol{X}$, where $\boldsymbol{X}=X_1 \times X_2 \times \cdots \times X_N$ is the set of all possible strategies. The utility function of EV drivers is as follows.
\begin{equation}
\label{eq:utility}
\begin{aligned}
  \Pi_i(\boldsymbol{x}) &= \Pi_{a,i}+\Pi_{c,i}+\Pi_{q,i} \\
  &=\left\{
  \begin{aligned}
    & -V_i (\textcolor{black}{SoC_{t,i}}-{SoC}_i) \cdot P -\textcolor{black}{\mu_{q,i}} \cdot t, & \text{if $x_i=1$}, \\
    & -\textcolor{black}{\mu_{a,i}}A_i, & \text{if $x_i=0$}.
  \end{aligned}
  \right.
\end{aligned}
\end{equation}

Given the fact that the charging time for an EV is quite long, the charging decisions of other vehicles significantly influence the queuing time and subsequent charging decisions of other EVs. Consequently, we formulate the highway charging decision problem as a game. To capture the feature that an EV driver is unaware of the SoC of other EVs, we employ the Bayesian Game, treating SoC as \textit{private information}. Each player lacks knowledge of others’ private information but is presumed to understand the distribution of such information, termed as \textit{common knowledge}. We denote that distribution's density function as $f(SoC)$. With these notations, we model the Highway Charging Game as follows.
\begin{framed}
\textit{\textbf{Highway Charging Bayesian Game}}
\begin{itemize}
\item \textit{Players}: $N$ EVs.
\item \textit{Private Information}: The state-of-charge of each EV.
\item \textit{Common Knowledge}: In this small segment of EV flow, the distribution density function of private information is $f(SoC)$.
\item \textit{Strategy}: Charge or not at the current charging station.
\item \textit{Payoffs}: The payoff of each EV $i$ with private information $SoC$ as defined in \eqref{eq:utility}.
\end{itemize}
\end{framed}

Beyond private information, the accessibility of information among other electric vehicles is also worth discussing. Some EV owners can estimate the destination distribution of this segment of EV flow based on experience, while others do not. \color{black} Given that the driver's knowledge of that destination distribution will significantly change the predicted outcome, we consider the two scenarios separately. \color{black} For those EVs whose owners without destination distribution knowledge, we assume they adopt self-referencing \cite{burnkrant1995effects}, a psychological effect where individuals use their own situation as a reference when estimating others’ circumstances \cite{cacioppo1979effects}. Thus, these drivers are presumed to perceive other EVs’ parameters as similar to their own, and we will discuss this scenario in Section~\ref{ssec:no_distribution}. Subsequently, we will examine the case where EV owners are aware of the destination distribution in Section~\ref{ssec:with_distribution}. The differences in information access are summarized in Table~\ref{tab:para_two_settings}.

\begin{table}[t]
\small
    \caption{Information access under two settings. The setting 1 is without the destination distribution knowledge discussed in Section~\ref{ssec:no_distribution}, and setting 2 is with the destination distribution discussed in Section~\ref{ssec:with_distribution}.}
    \label{tab:para_two_settings}
    \centering
    \begin{threeparttable}
    \begin{tabular}{cccc}
        \toprule
        Types & Description/Notation & Setting 1 & Setting 2 \\
        \midrule
        EV Information & $V_i$ & $\square$ & $\square$ \\
        & $d_{s,i}$ & $\square$ & $\square$ \\
        & $d_{0,i}$ & $\square$ & $\square$ \\
        & $\lambda_i$ & $\square$ & $\square$ \\
        & $\alpha_i$ & $\square$ & $\square$ \\
        & Destination & $\square$ & $\blacksquare$ \\
        CS Information & Positions & $\bullet$ & $\bullet$ \\
                       & $P$ & $\bullet$ & $\bullet$ \\
                       & $t_0$ & $\bullet$ & $\bullet$ \\
                       & Arrival $SoC$ & $\blacksquare$ & $\blacksquare$ \\
        \bottomrule
    \end{tabular}
    \begin{tablenotes}
        \item[1] $\bullet$ denotes the drivers know the exact value.
        \item[2] $\square$ denotes the drivers do not know the exact value and evaluate it by self-referencing.
        \item[3] $\blacksquare$ denotes the drivers do not know the exact value but know the distribution of it.
    \end{tablenotes}
    \end{threeparttable}
\end{table}

\section{Equilibrium of Highway Charging Bayesian Game}
\label{sec:solution}

In the Bayesian game content, the most important solution concept is the \textit{Bayesian Nash Equilibrium} (BNE). A BNE is defined as a strategy profile that maximizes the expected payoff for each player given their beliefs about other players' private information and the strategies played by the others. It provides coalitional stability in that once the game fills into the equilibrium, no player can increase his expected profit if unilaterally deviates from the equilibrium. Mathematically, the BNE is the strategy vector with the following property:
\begin{equation} \label{eq:BNE_definition}
    x_i^*(\mathbf{x_{-i}^*}) = \mathop{\arg\max}\limits_{x_i}\mathbb{E}[\Pi_i(x_i,\mathbf{x_{-i}^*})], \forall i \in \mathcal{N}.
\end{equation}

\subsection{Existence and Uniqueness of BNE}
\label{subsec:existence_uniqueness}

In the non-cooperative game, the existence of the Nash Equilibrium is not always guaranteed \cite{bacsar1998dynamic}. Therefore, it is necessary to investigate the existence of the  BNE. In this section, we will discuss the existence and uniqueness of BNE in both scenarios. \textcolor{black}{To simplify the discussion, we assume that the weighting parameters and target SoC for all vehicles are the same, denoted as $\mu_a$,$\mu_q$ and $SoC_t$.\footnote{\textcolor{black}{We will discuss the diverse parameters scenario in Section~\ref{sec:diverse_para}.}}}

According to the definition of BNE, the driver with state-of-charge $SoC_i$ choose the higher expected profits between charge $\mathbb{E}[\Pi(x_i=1)|SoC_i]$ and not to charge $\mathbb{E}[\Pi(x_i=0)|SoC_i]$. Therefore, we define a function $h(SoC_i)$ as
\begin{align} \label{eq:h}
\small
    h(SoC_i) :=& \mathbb{E}[\Pi(x_i=0)|SoC_i] - \mathbb{E}[\Pi(x_i=1)|SoC_i], \nonumber \\
    =& \mathbb{E}[-\mu_aA_i] - \mathbb{E}[-V_iP(SoC_t-SoC_i) -\mu_q \cdot t].
\end{align}

Given the state-of-charge of driver known, the range anxiety and charging price are constants. Therefore,
\begin{align}
    h(SoC_i) = -\mu_aA_i + V_iP(SoC_t-SoC_i) + \mu_q \mathbb{E}[t].
\end{align}

According to Equation~\eqref{eq:queuing_time}, the expected queuing time $\mathbb{E}[t]$ is influenced by the count of charging EVs $c(N)$, which is hard to obtain the close-form expression. However, the expected queuing time is independent with the remaining energy level \textcolor{black}{$SoC_i$. For example, whether} an EV arrives at the charging station with 20\% or 50\% energy left, the expected queuing time doesn't change. This indicates $\frac{\rm d \mathbb{E}[t]}{\rm d SoC_i}=0$. Leveraging this finding, we examine the deviation of function $h(SoC_i)$, leading to the subsequent lemma:

\begin{lemma} \label{lmm:NE_form}
In both scenarios of the highway charging Bayesian game, the charging decision of EV $i$ under BNE $x_i$ is as follows,
\begin{align} \label{eq:BNE_form}
x_i=1 \text{ if } SoC_i < \widehat{SoC_i}, \nonumber \\
x_i=0 \text{ if } SoC_i \ge \widehat{SoC_i},
\end{align}
where $\widehat{SoC_i}$ is a constant between 0\% to 100\%.
\end{lemma}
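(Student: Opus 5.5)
We will show that $h(SoC_i)$ defined in \eqref{eq:h} is monotonically non-decreasing in $SoC_i$. Recall that EV $i$ charges exactly when $h(SoC_i)<0$, i.e., when not charging yields lower expected payoff. The threshold $\widehat{SoC_i}$ can then be taken as the point where $h$ changes sign. The plan has three steps.

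\textbf{Step 1 (fix the queuing term).} First fix the equilibrium strategies of the other players. Then $\mathbb{E}[t]$ is a constant independent of $SoC_i$. Indeed, by \eqref{eq:queuing_time} it depends only on the distribution of $c(N)$ and on $\mathbb{E}[V(SoC_t-SoC)]$, both of which are computed from the common-knowledge density $f(SoC)$ and the others' strategies. In the self-referencing scenario they also depend on $i$'s own parameters, but not on $SoC_i$. This holds in both scenarios, so $\frac{\rm d\mathbb{E}[t]}{{\rm d}SoC_i}=0$. It follows that
\[
h(SoC_i)=-\mu_aA_i(SoC_i)+V_iP(SoC_t-SoC_i)+C_i,
\]
where $C_i=\mu_q\mathbb{E}[t]\ge 0$ is a constant.

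\textbf{Step 2 (piecewise analysis of $h$).} Next we split $[0,1]$ into the three regions of \eqref{eq:a_i}.

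On $SoC_i<SoC_{d,i}$, we have $A_i=\infty$, so $h=-\infty<0$ and the driver charges.

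On $[SoC_{d,i},SoC_{s,i}]$, differentiating gives
\[
h'(SoC_i)=\mu_a\lambda_i\alpha_i(SoC_{s,i}-SoC_i)^{\alpha_i-1}-V_iP .
\]
This derivative need not be positive everywhere; near $SoC_{s,i}$ with $\alpha_i>1$ it may be negative. So I would not rely on monotonicity of $h$ directly. Instead, compare the two branches. The charging payoff $-V_iP(SoC_t-SoC_i)-C_i$ is increasing and linear in $SoC_i$. The non-charging payoff $-\mu_aA_i$ is increasing and convex on this interval, since $\lambda_i(SoC_{s,i}-SoC_i)^{\alpha_i}$ is convex and decreasing for $\alpha_i\ge 1$. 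It also equals $0$ on $[SoC_{s,i},1]$.

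On $[SoC_{s,i},1]$, $h=V_iP(SoC_t-SoC_i)+C_i$, which is linear and decreasing. Since $SoC_t\ge SoC_i$ in the relevant range and $C_i\ge 0$, we get $h\ge 0$, so the driver does not charge.

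\textbf{Step 3 (the sign change is unique — the main obstacle).} On the middle interval, $h(SoC_{d,i}^+)<0$ by assumption \eqref{eq:mu_a_much_greater}, and $h(SoC_{s,i})\ge 0$. The key claim is that $h$ changes sign only once. On this interval $h$ is the sum of the concave function $-\mu_aA_i$ and a linear function, so $h$ is concave there. A concave function that is negative at the left endpoint and non-negative at the right endpoint crosses zero exactly once. Concretely, if $h(a)\ge 0$ for some $a$, then for every $b>a$ in the interval, $h(b)$ is at least the chord value between $h(a)\ge 0$ and $h(SoC_{s,i})\ge 0$, so $h(b)\ge 0$.

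This convexity argument is the step I expect to need care. It requires $\alpha_i\ge 1$, which is stated in the paper's parameter discussion, and it requires the boundary sign conditions above. Once it holds, define
\[
\widehat{SoC_i}=\inf\{SoC: h(SoC)\ge 0\}\in[SoC_{d,i},SoC_{s,i}]\subset[0,1].
\]
This yields the threshold form \eqref{eq:BNE_form}. Ties at $h=0$ are assigned to $x_i=0$, matching the weak inequality in the statement.
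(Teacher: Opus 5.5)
Your proof is essentially the paper's: hold $\mathbb{E}[t]$ fixed, split $[0,1]$ at $SoC_{d,i}$ and $SoC_{s,i}$, and use concavity of $h$ on $[SoC_{d,i},SoC_{s,i}]$ together with $h<0$ below that interval and $h\ge 0$ above it. Your chord/infimum finish just replaces the paper's case split plus intermediate value argument, and your $h'$ has the correct sign where the paper's displayed first derivative has a sign slip.

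Three corrections are needed. In Step 2, $-\mu_a A_i$ is concave, not convex. The opening claim that $h$ is non-decreasing is false and should be dropped. Finally, $h(SoC_{d,i})<0$ does not follow from \eqref{eq:mu_a_much_greater}, which ignores the fee term $V_iP(SoC_t-SoC_{d,i})$; the paper explicitly allows $h(SoC_{d,i})\ge 0$, giving $\widehat{SoC_i}=SoC_{d,i}$. Since your upward-closedness argument only uses $h(SoC_{s,i})\ge 0$, you can simply drop that claim.
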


The proof of Lemma~\ref{lmm:NE_form} is placed in Appendix~\ref{apdx-prof_NE_form}. Intuitively, in the BNE, each vehicle has a charging decision point $\widehat{SoC_i}$ and charges when and only when the energy is less than that decision point. Lemma~\ref{lmm:NE_form} shows that the decision point of each EV exists and is unique, which also proves the solution uniqueness in $N$-EV system as follows.

\begin{theorem}\label{thm:existence_uniqueness}
In both scenarios of the highway charging Bayesian game, there exists a unique BNE under any parameters.
\end{theorem}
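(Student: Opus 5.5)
By Lemma~\ref{lmm:NE_form}, every BNE is a threshold profile: each EV $i$ charges if and only if $SoC_i<\widehat{SoC_i}$. The plan is therefore to reduce the game to a fixed-point problem in the threshold vector $\widehat{\mathbf{s}}=(\widehat{SoC_1},\dots,\widehat{SoC_N})\in[0,1]^N$. We then show this fixed-point problem has exactly one solution in each of the two information settings.

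\textbf{Step 1: express the indifference condition.} Given the others' thresholds $\widehat{\mathbf{s}}_{-i}$, each other EV $j$ charges with probability $p_j=F(\widehat{SoC_j})=\int_0^{\widehat{SoC_j}} f(s)\,ds$. Hence $c(N)-1$, counted from the viewpoint of a charging EV $i$, is a sum of independent Bernoulli$(p_j)$ variables. By \eqref{eq:queuing_time}, $\mathbb{E}[t]=t_0+\frac{1}{2k}\,\bar{C}\sum_{j\ne i}p_j$, where $\bar{C}=\mathbb{E}[V(SoC_t-SoC)/\eta]$ is the self-referenced (Setting~1) or distribution-based (Setting~2) expected charging time. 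Note that $\mathbb{E}[t]$ does not depend on $SoC_i$.

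The threshold $\widehat{SoC_i}$ is then the root in $SoC_i$ of $h(SoC_i)=-\mu_aA_i+V_iP(SoC_t-SoC_i)+\mu_q\mathbb{E}[t]=0$. On $[SoC_{d,i},SoC_{s,i}]$ the first term $-\mu_aA_i$ is strictly increasing, from $-\infty$ in the limit by \eqref{eq:mu_a_much_greater} up to $0$. The second term is strictly decreasing, and the third is constant in $SoC_i$. Following the argument of Lemma~\ref{lmm:NE_form}, $h$ crosses zero at most once, with the corner case $\widehat{SoC_i}=SoC_{s,i}$ or $1$ when $h$ stays negative.

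The crossing statement needs care, because $-\mu_aA_i$ rises while $V_iP(SoC_t-SoC_i)$ falls, so monotonicity of $h$ on the whole interval is not automatic. I would take the single-crossing property exactly as already established in the proof of Lemma~\ref{lmm:NE_form} and write the result as a best-response map $\widehat{SoC_i}=g_i(\mathbb{E}[t])$.

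\textbf{Step 2: existence.} The map $g_i$ is continuous in $\mathbb{E}[t]$, because $A_i$ is continuous on $[SoC_{d,i},SoC_{s,i}]$. It is also nonincreasing: a larger expected queue raises $h$, so the zero crossing moves to lower SoC. The composite map $\Phi:[0,1]^N\to[0,1]^N$, with $\Phi_i(\widehat{\mathbf{s}})=g_i\big(t_0+\frac{\bar C}{2k}\sum_{j\ne i}F(\widehat{SoC_j})\big)$, is continuous on a compact convex set, so Brouwer's theorem gives a fixed point. That fixed point is a BNE by Lemma~\ref{lmm:NE_form}.

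\textbf{Step 3: uniqueness.} This is the main obstacle. Each $\Phi_i$ is decreasing in the others' thresholds, so Tarski-type monotonicity arguments do not apply directly.

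In Setting~1, self-referencing makes every driver believe the others have parameters identical to its own, so the perceived game is symmetric. The first step is to argue that driver $i$'s belief system forces $\widehat{SoC_j}=\widehat{SoC_i}$ for all $j$. This reduces the problem to the scalar equation $s=g_i\big(t_0+\frac{\bar C(N-1)}{2k}F(s)\big)$. The left side is strictly increasing in $s$ and the right side is nonincreasing, so there is exactly one crossing.

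In Setting~2, I would try the same strategy by aggregation. Let $S=\sum_jF(\widehat{SoC_j})$. Each threshold then satisfies $\widehat{SoC_i}=g_i\big(t_0+\frac{\bar C}{2k}(S-F(\widehat{SoC_i}))\big)$, and for each fixed $S$ this equation determines $\widehat{SoC_i}$ uniquely as a nonincreasing function $\psi_i(S)$ of $S$. Substituting gives $S=\sum_iF(\psi_i(S))$, whose left side is strictly increasing and right side nonincreasing in $S$, so there is a unique solution $S^*$ and hence unique thresholds $\widehat{SoC_i}=\psi_i(S^*)$.

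The delicate points are the strictness and continuity needed where $F$ is flat, and the corner thresholds at $0$ or $1$. If these cause trouble, I would fall back on a contraction argument, checking that $|\partial g_i/\partial\mathbb{E}[t]|\cdot\frac{\bar C}{2k}f\cdot(N-1)<1$. Failing that, I would restrict uniqueness to uniqueness of the induced decision profile $x_i$, which is all that the statement requires.
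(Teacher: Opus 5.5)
\paragraph{Relation to the paper.} Your route differs from the paper's. The paper's proof is just Lemma~\ref{lmm:NE_form}: it treats $\mathbb{E}[t]$ as a constant, shows each $h$ has one switching point, and reads off uniqueness player by player. It never checks that the threshold is consistent with the $p_i=F(\widehat{SoC_i})$ that feeds back into $\mathbb{E}[t]$.

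\paragraph{Setting~1 is correct.} Your Setting~1 argument closes that loop. With $\beta:=\bar C/(2k)$ ($k$ the number of piles), the threshold solves $s=g_i\bigl(t_0+\beta(N-1)F(s)\bigr)$. The left side is strictly increasing and the right side is continuous and nonincreasing, so the intermediate value theorem gives existence and monotonicity gives uniqueness. Under self-referencing the drivers' problems decouple, so Brouwer on $[0,1]^N$ is unnecessary. Two corrections:
\begin{itemize}
\item $\widehat{SoC_j}=\widehat{SoC_i}$ is part of the self-referencing model (the paper posits $\mathbb{E}[c(N)]=1+(N-1)p_i$). It cannot be derived from beliefs, as the example below shows.
\item $-\mu_aA_i$ is finite at $SoC_{d,i}$, and $h(SoC_{s,i})=V_iP(SoC_t-SoC_{s,i})+\mu_q\mathbb{E}[t]\ge 0$. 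So $h$ never ``stays negative''; the corner case is $\widehat{SoC_i}=SoC_{d,i}$ when $h(SoC_{d,i})\ge 0$.
\end{itemize}

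\paragraph{The gap is Setting~2.} You let each EV respond to the \emph{actual} thresholds of heterogeneous opponents. You then claim that for fixed $S$ the equation $s=g_i\bigl(t_0+\beta(S-F(s))\bigr)$ has a unique solution $\psi_i(S)$, nonincreasing in $S$. But $S-F(s)$ is nonincreasing in $s$ and $g_i$ is nonincreasing, so the right side is \emph{nondecreasing} in $s$, like the left side. Implicit differentiation gives $\psi_i'=-a/(1-af)$ with $a=\beta|g_i'|$, so both uniqueness and the sign you need require $af<1$.

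This is not a technicality: in your game best responses are strategic substitutes, and uniqueness fails. Let $\phi(s):=\mu_aA_i(s)-V_iP(SoC_t-s)$, so an interior threshold solves $\mu_q\mathbb{E}[t]=\phi(\widehat{SoC_i})$ on the branch where $\phi$ decreases.
\begin{itemize}
\item Take two identical EVs and points $SoC_{d,i}<L<H$ on that branch with $\phi(H)>0$.
\item Take a density with $F(L)=0<F(H)$.
\item Set $\mu_qt_0=\phi(H)$ and $\mu_q\beta F(H)=\phi(L)-\phi(H)$.
\end{itemize}
Then $(H,L)$ and $(L,H)$ are both equilibria, and they induce different charging behavior with positive probability. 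Neither fallback rescues the statement. A contraction bound is a parameter restriction, contradicting ``under any parameters''. Uniqueness of the decision profile fails together with uniqueness of thresholds.

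\paragraph{The fix.} Use the paper's Setting~2. Only the destination distribution is known; $V,d_s,d_0,\lambda,\alpha$ are still self-referenced (Table~\ref{tab:para_two_settings}). Driver $i$'s perceived population is $N_k$ copies of himself per destination $k$, differing only in $SoC^k_{s,i},SoC^k_{d,i}$, all evaluated at a common $\mathbb{E}[t]$ as in \eqref{eq:soc_hat_k}. Parametrize by the scalar $T=\mathbb{E}[t]$ rather than by $S$. Each class threshold $\widehat{SoC_i^k}=g_i^k(T)$ is continuous and nonincreasing by your argument for $g_i$. Equations \eqref{eq:p_ik}, \eqref{eq:e_cn_traffic}, \eqref{eq:Et} turn consistency into
\[ T=t_0+\beta\Bigl[\sum_{k\neq k'}N_kF\bigl(g_i^k(T)\bigr)+(N_{k'}-1)F\bigl(g_i^{k'}(T)\bigr)\Bigr]. \]
The right side is nonincreasing in $T$ because all coefficients are nonnegative. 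The left side is strictly increasing, lies below the right side at $T=t_0$, and exceeds it for large $T$. So there is exactly one solution. This is your Setting~1 argument with $K$ classes, and it is what Algorithm~\ref{alg:solve_BNE_traffic} exploits. Keep the common-$\mathbb{E}[t]$ convention explicit: if each class subtracts its own term, you are back to the $S-F(s)$ structure above.
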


\subsection{BNE Solving Method without Destination Distribution Knowledge}
\label{ssec:no_distribution}

In this subsection, we design solution for BNE without destination distribution knowledge. According to Equation~\eqref{eq:queuing_time},
\begin{align} \label{eq:Et}
\small
\mathbb{E}[t] &= \mathbb{E}[\frac{1}{2k} (c(N)-1) \mathbb{E}[\frac{V(SoC_t-\textcolor{black}{SoC_i})}{\eta}] + t_0], \nonumber \\
     &= \frac{1}{2k} (\mathbb{E}[c(N)]-1) \frac{\mathbb{E}[V(SoC_t-\textcolor{black}{SoC_i})]}{\eta} + t_0.
\end{align}

We first consider $\mathbb{E}[V(SoC_t-\textcolor{black}{SoC_i})]$ in \eqref{eq:Et}. For any vehicle, the battery capacity $V$ is independent of remaining energy level $SoC$, therefore
\begin{align}
    \mathbb{E}[V({SoC}_t-{SoC})] = \mathbb{E}[V] (SoC_t - \mathbb{E}[{SoC}]).
\end{align}
Since the distribution of state-of-charge of arrival EVs is the common knowledge, $\mathbb{E}[{SoC}] = \int f(SoC)\mathrm{d}SoC$. Besides, under the self-referencing, drivers consider $E[V]=V_i$. Thus
\begin{align}
    \mathbb{E}[V(SoC_t-{SoC})] = V_i (SoC_t - \int f(SoC)\mathrm{d}SoC).
\end{align}

We next examine the expected count of charging EVs, $\mathbb{E}[c(N)]$ in \eqref{eq:Et}. For an EV characterized by parameters $(V_i,d_i,d_{s,i},d_{0,i})$ and a random state-of-charge following distribution $f(SoC)$ in a system of $N$ EVs, the optimal strategy could be to charge or not. Given the stochastic state of charge, we denote the probability that the optimal decision is to charge as $p_i$. Under the self-referencing, a driver with parameters $(V_i,d_i, d_{s,i},d_{0,i})$ considers the charging probability of the other N-1 EVs as $p_i$ when evaluating the cost of charging.\footnote{We will release this assumption that EVs infer other EVs' destination by self-referencing in Section~\ref{ssec:with_distribution}.} Thus, $\mathbb{E}[c(N)]$ is the expected value of one (himself) plus the Binomial distribution with $N-1$ trials and success probability $p_i$, i.e.,
\begin{align}
\small
   \mathbb{E}[c(N)] = 1+(N-1)p_i.
\end{align}
With the BNE form result in Lemma~\ref{lmm:NE_form}, the probability $p_i$ can be calculated by the common knowledge of state-of-charge:
\begin{align}
    p_i = \int_0^{\widehat{SoC_i}} f(SoC) \rm d SoC,
\end{align}
where $\widehat{SoC_i}$ satisfies condition~\eqref{eq:BNE_form}. With the above conclusion, we have the complete expression for $h(SoC_i)$.

Then, we design an algorithm as outlined in Algorithm~\ref{alg:solve_BNE_normal}. We iterate over all EVs and solve for charging decision point $SoC_i$ for each driver. Based on the analysis in Appendix~\ref{apdx-prof_NE_form}, when $h(SoC_{d,i}) \geq 0$, the charging decision point $\widehat{SoC_i}=SoC_{d,i}$. Otherwise, it is the unique solution of $h(SoC_i)=0$ (as shown in Figure~\ref{fig:h_shape}). Since the solution of $h(SoC_i)=0$ is unique and bounded in the interval $[SoC_{d,i},SoC_{s,i}]$, we use binary search to solve that unique solution $\widehat{SoC_i}$. With all charging decision points, we obtain the charging decision of each EV driver among $N$ drivers, which is the Bayesian Nash Equilibrium.

\begin{algorithm}[t]
\small
\DontPrintSemicolon
\caption{BNE solution approach without destination distribution knowledge.}
\label{alg:solve_BNE_normal}
\KwIn{All parameters in Table~\ref{tab:parameter_symmetric} (setting 1), precision requirement $\epsilon$ (default 0.1\%).}
\For{$i=1$ \KwTo $N$}{
    Calculate $SoC_{s,i}$, $SoC_{d,i}$ using equations~\eqref{eq:dsi} and \eqref{eq:d0i}.\;
    \eIf{$h(SoC_{d,i}) < 0$}{
        Initiate $l=SoC_{d,i}$, $r=1$, $d=+\infty$.\;
        \While{$r - l > \epsilon$}{
            Calculate $m=\frac{1}{2}(l+r)$, $v_l=h(l)$, $v_r=h(r)$, $v_m = h(m)$.\;
            \eIf{$v_m =0$}{
                Break while.\;
            }{
                \eIf{$v_l \cdot v_m \leq 0$}{
                    Update $r=m$.\;
                }{
                    Update $l=m$.\;
                }
            }
        }
        $\widehat{SoC_i}=m$.\;
    }{
        $\widehat{SoC_i}=SoC_{d,i}$.\;
    }
    EV $i$ charge if $SoC_i < \widehat{SoC_i}$, not charge otherwise.\;
}
\end{algorithm}

\subsection{BNE Solving Method with Destination Distribution Knowledge}
\label{ssec:with_distribution}

In the previous subsection, we assume that EV owners use self-referencing to infer the destinations of other car owners. In this section, we liberalize this assumption and consider a situation where the EVs know destination distribution.

Given that the owner is aware of the proportion of vehicles traveling to various destinations within the system, it is essential to consider system solutions that encompass vehicles traveling to different destinations. When an EV with identical parameters travels to different destinations, the charging decision point will vary due to the change in the distance to the next charging station. We denote the decision point for EV $i$ when travelling to destination $k$ by $\widehat{SoC_i^k}$. Since all EVs in the system share the same expected queuing time, and the decision points satisfy $h(\widehat{SoC_i^k})=0$ for any $k$, we have
\begin{align} \label{eq:soc_hat_k}
\small
    \mu_q \mathbb{E}[t] = \mu_a \lambda (SoC_{S,i}^k - \widehat{SoC_i^k})^\alpha_i + V_i P_i (1-\widehat{SoC_i^k}), \quad \forall k.
\end{align}

The right side of Equation~\eqref{eq:soc_hat_k} is decreasing with $SoC_i^k$. Therefore, once one driver's charging decision point is known, the decision points for vehicles with the same parameters but different destinations can be calculated using Equation~\eqref{eq:soc_hat_k}. In consequence, their charging probabilities $p_i^k$ ($k=1,\dots,K$, different by destination) can be estimated by
\begin{align} \label{eq:p_ik}
\small
    p_i^k = \int_0^{\widehat{SoC_i^k}} f(SoC) \rm d SoC.
\end{align}

Then we re-examine the expected count of charging EVs, $\mathbb{E}[c(N)]$ in this scenario. In the $N$-player system, we denote $N_k$ as the number of EVs going to destination $k$, and denote $c(N_k)$ as the number who decide to charge among them. Since $c(N) = c(N_1) + \dots + c(N_K)$, when one EV driver traveling to destination $k^\prime$ considers the queuing cost of charging, he will consider the charging number $\mathbb{E}[c(N)]$ as one (himself) plus the sum of expectation of $k$ Binomial distributions, i.e.,
\begin{align} \label{eq:e_cn_traffic}
\small
    \mathbb{E}[c(N)] &= \sum_{k=1}^K \mathbb{E}[c(N_k)], \nonumber \\
    &= \sum_{k\neq k^\prime} (N_k \cdot p_i^k) + (N_{k^\prime}-1) \cdot p_i^{k^\prime} + 1.
\end{align}

As we assume all drivers know the diversion ratios at each intersection in the road network, the distribution of vehicles to various destinations is also known. Thus $N_k$ can be calculated with total number $N$ and distribution. With $K$ set of equations~\eqref{eq:soc_hat_k}, equation~\eqref{eq:Et} and ~\eqref{eq:e_cn_traffic}, the system, including the $k+2$ variables , is solvable.

We design the improved binary search algorithm as shown in Algorithm~\ref{alg:solve_BNE_traffic} based on the unique solution of $h(SoC_i^k)=0$. First, we assume the decision point as $SoC_i^k=m$, and calculate the decision points for those EVs with same parameters but different destinations $SoC_i^{-k}$ based on Equations~\eqref{eq:soc_hat_k}. Then, we calculate the $\mathbb{E}[t]$ using Equation~\eqref{eq:p_ik} and Equation ~\eqref{eq:e_cn_traffic}. Finally, we could determine the sign of $h(SoC_i^k)$ and adopt the binary search.

\begin{algorithm}[t]
\small
\DontPrintSemicolon
\caption{BNE solution approach with destination distribution knowledge.}
\label{alg:solve_BNE_traffic}
\KwIn{All parameters in Table~\ref{tab:parameter_symmetric} (setting 2), precision requirement $\epsilon$ (default 0.1\%).}
\For{$i=1$ \KwTo $N$}{
    Initiate $l=SoC_{d,1}$, $r=1$.\;
    \While{$r - l > \epsilon$}{
        Assign $m=\frac{1}{2}(l+r)$.\;
        Calculate the charging probabilities using equations~\eqref{eq:soc_hat_k} and equation~\eqref{eq:p_ik}.\;
        Calculate the $\mathbb{E}[t]$ using equation~\eqref{eq:Et} and ~\eqref{eq:e_cn_traffic}.\;
        Calculate $h(m)$ using $\mathbb{E}[t]$.\;
        Calculate $h(l)$ and $h(r)$ similarly.\;
        \eIf{$h(m)=0$}{
            Break while.\;
        }{
            \eIf{$h(l)\cdot h(m) \leq 0$}{
                Update $r=m$.\;
            }{
                Update $l=m$.\;
            }
        }
    }
    EV $i$ charge if $SoC_i < \widehat{SoC_i}$, not charge otherwise.\;
}
\end{algorithm}

\section{Discussion: Results with diverse parameters}
\label{sec:diverse_para}
In this section, we discuss EVs with different weighting parameters and diverse target SoC values. We denote the heterogeneous parameters as $\mu_{a,i}$, $\mu_{q,i}$ and $SoC_{t,i}$. Under heterogeneous setting, the payoff in Equation~\eqref{eq:utility} is

\begin{equation}
\small
\label{eq:utility_div}
\begin{aligned}
  \Pi_i(\boldsymbol{x}) &= \Pi_{a,i}+\Pi_{c,i}+\Pi_{q,i} \\
  &=\left\{
  \begin{aligned}
    & -V_i (\textcolor{black}{SoC_{t,i}}-{SoC}_i) \cdot P -\mu_{q,i} \cdot t, & \text{if $x_i=1$}, \\
    & -\mu_{a,i} A_i, & \text{if $x_i=0$}.
  \end{aligned}
  \right.
\end{aligned}
\end{equation}

The accessibility of this information is also worth discussing. Similar to the State-of-charge, these parameters are also the \textit{private information} known only to the driver. How an EV driver evaluates the parameters of other drivers will directly influence the results. In this paper, we consider two information accessibility scenarios:
\begin{itemize}
    \item \textbf{Self-referencing}: The driver estimate other EVs' parameters by self-referencing \cite{burnkrant1995effects}, i.e., the driver assumes other EVs have the same parameters as he do.
    \item \textbf{Expectation}: We assume the distribution of these parameters is common knowledge known to all drivers.
\end{itemize}

First, we consider the self-referencing scenario. Under self-referencing, function $h(SoC_i)$ is
\begin{align} \label{eq:h_diverse_sr}
\small
    h(SoC_i) :=& \mathbb{E}[\Pi(x_i=0)|SoC_i] - \mathbb{E}[\Pi(x_i=1)|SoC_i], \nonumber \\
    =& \mathbb{E}[-\mu_{a,i}A_i] - \mathbb{E}[-V_iP(SoC_{t,i}-SoC_i) -\mu_{q,i} \cdot t], \nonumber \\
    =& -\mu_{a,i} A_i + V_iP(SoC_{t,i}-SoC_i) + \mu_{q,i} \mathbb{E}[t]. \nonumber
\end{align}
Taking the derivation, we still have  $\frac{\rm d h(SoC_i)}{\rm d SoC_i} > \frac{\rm d h(SoC_{s,i})}{\rm d SoC_i} = -V_iP $. There still exists only one point satisfying condition~\eqref{eq:BNE_form_h}, thus the existence and uniqueness results still hold. The uniqueness-based design of Algorithms~\ref{alg:solve_BNE_normal} and ~\ref{alg:solve_BNE_traffic} are also effective in solving the BNE in this scenario.

Next, we consider the Expectation scenario. When EV $i$ evaluating the decision of any other EV $j$, he will estimate the expected payoff of EV $j$ as
\begin{align}
\small
    h(SoC_j) :=& \mathbb{E}[\Pi(x_j=0)|SoC_j] - \mathbb{E}[\Pi(x_j=1)|SoC_j], \nonumber \\
    =& \mathbb{E}[-\mu_{a,j}A_j] - \mathbb{E}[-V_jP(SoC_{t,j}-SoC_j) -\mu_{q,j} \cdot t], \nonumber \\
    =& -\mathbb{E}[\mu_a] A_i + V_iP(\mathbb{E}[SoC_t]-SoC_i) + \mathbb{E}[\mu_q] \mathbb{E}[t], \nonumber
\end{align}
where $\mathbb{E}[\mu_a]$, $\mathbb{E}[\mu_q]$ and $\mathbb{E}[SoC_t]$ are the expected values of the parameters. The parameters are all known constant thus the proposed existence and uniqueness theory hold, and the algorithms are effective in solving the decision point. \footnote{\textcolor{black}{Please note that this decision point might deviate from the result where EV $j$ values himself, since the parameters are private information only known to EV $j$, while EV $i$ estimates them using statistical distribution.}} With the estimated charging decision of other EVs, EV $i$ makes its own charging decisions by comparing the expectations of charging and not charging.

Overall, our results remain valid with different parameters. Operators can make more accurate predictions of vehicle owners' charging behavior by considering different users' parameter preferences.

\section{Benefits of Accurate Prediction}
\label{sec:benefit_cs}

An accurate predictions of charging behavior would enhance many of the economic benefits of the road network. In this section, we give an example of enhancing the economic benefits of charging stations.

Let's consider the scenario where a highway charging station with energy storage needs to determine the amount of power to be purchased from the grid each hour to reduce their operating costs. Since the price of electricity has peak and valley volatility, the charging station wants to try to avoid buying electricity during peak periods while ensuring enough power to provide service. We denote the price of electricity by $\mathbf{Pe} = [Pe_1,\dots,Pe_T]^\top$ and the charging station's power purchase behavior by $\mathbf{X} = [X_1,\dots,X_T]^\top$. We use $\mathbf{D} = [D_1,\dots,D_T]$ to denote the charging station's forecast of charging demand. The charging station wants to minimize the charging cost, that is he has to solve the following optimization problem:
\begin{subequations} \label{eq:CS_optimization}
\small
\begin{align}
&\min_{\mathbf{X}} &\mathbf{X}^\intercal \mathbf{Pe},\\
&s.t. &\mathbf{0} \leq \mathbf{X} \leq \overline{\mathbf{X}}, \label{seq:CS_optimization_bound_contraint}\\
&     &E_t = E_0 + \sum_{\tau=1}^t X_{\tau} - \sum_{\tau=1}^{t-1} D_t,\\
&     &0 \leq E_t \leq \overline{E}, \label{seq:CS_optimization_energy_constraint}
\end{align}
\end{subequations}
where $\overline{\mathbf{X}}$ denotes the maximum charging capacity at time $t$, $E_t$ denotes the energy in the storage, and the storage level should be between zero and maximum capacity $\overline{E}$. \color{black} The optimization problem \eqref{eq:CS_optimization} can be reduced to a linear optimization problem, and there are many sophisticated toolboxes for solving it, such as \textit{linprog} in Matlab. With the optimal solution of \eqref{eq:CS_optimization}, the charging station operator minimizes the charging cost and therefore improves the profit. \color{black} From this optimization problem we can see that the charging station's forecast of charging demand for electric vehicles $\mathbf{D}$ speaks to his revenue. We will show in an experimental in Section~\ref{ssec:benefits_cs} to show that compared to other methods, our method enhances the revenue of the charging station.

\section{Experiments}
\label{sec:experiment}

In the experimental section, we employ numerical simulations to validate the efficacy of our proposed model. We first used a high-speed road network to simulate the system under different risk averse preferences for parameters such as road network exit departure rates, charging station queue lengths and charging demand, and average queuing times for electric vehicles. Second, we use the model in Section~\ref{sec:benefit_cs} to explore the benefits of highway charging stations under this model compared to other bounded rationality models.

\subsection{Parameter Settings}

We consider a scenario involving the continuous 24-hour highway arrival flow and a highway network as shown in Figure~\ref{fig:traffic_network}. The primary parameters are detailed in Table~\ref{tab:parameter_symmetric}. We use data from highways England A1, A2 highway arrival in January 2023 respectively to simulate the arrival rate of Entrance 2 and Entrance 1 \cite{highwaysengland}. We assume that EVs constitute 2\% of the total vehicle population \cite{EVpenetration}. Then, we designate each 15-minute interval as a system to compute the charging decisions of each EV based on the queuing status in the previous interval. We initialize the queuing time of preceding EVs $t_0=0$ at the commencement of our simulation (0:00). The CS parameters are shown in Table~\ref{tab:parameter_symmetric}. As for the number of charging piles and charging price, we set the $P=2.3$ CNY/kWh and $k=4$ for CS 1, CS 2 and CS 3, $P=2.5$ CNY/kWh and $k=6$ for CS 4 and CS 5. \color{black}
As for information accessibility, we calculate results in both settings in single CS simulations.
In the road network simulations, we adopt the scenario where EVs know the destination distribution information (Setting 2).
\color{black}

\begin{figure*}[t]
\centering
\begin{subfigure}[t]{0.5\textwidth}
    \centering
    \includegraphics[width=0.9\linewidth]{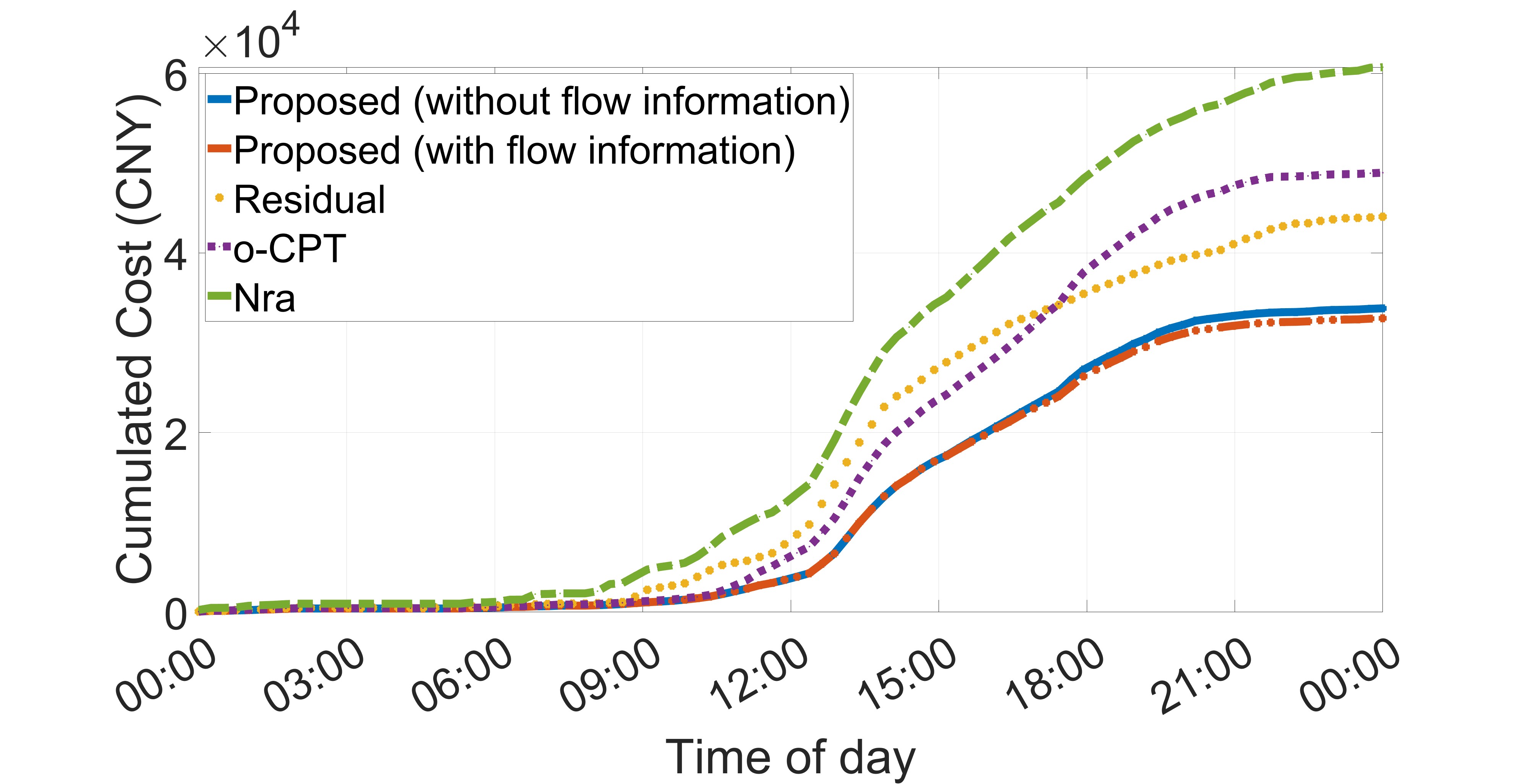}
    \caption{Weekdays.}
    \label{fig:cost_EV_weekdays}
\end{subfigure}%
\hfill
\begin{subfigure}[t]{0.5\textwidth}
    \centering
  \includegraphics[width=0.9\linewidth]{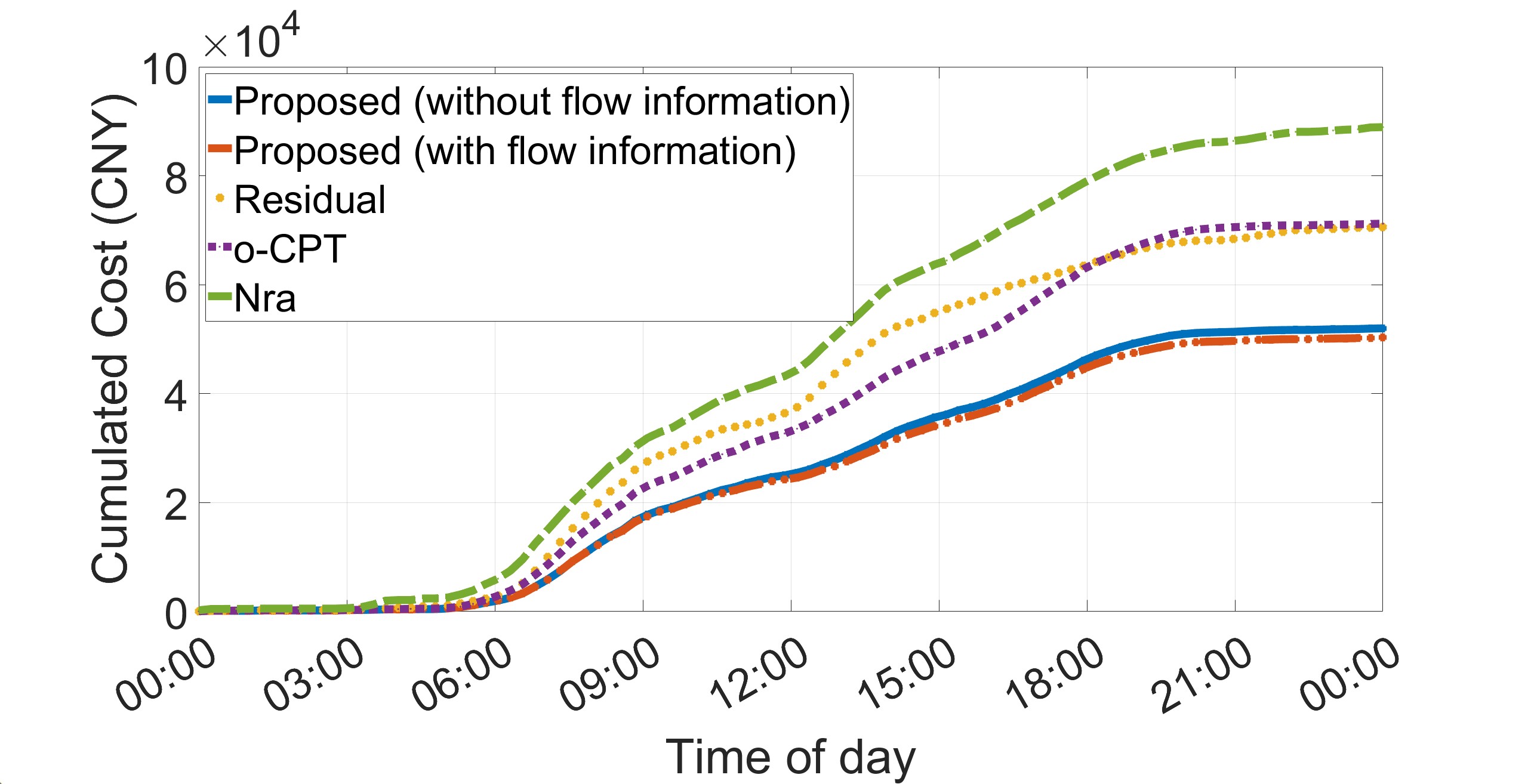}
  \caption{Weekends.}
  \label{fig:cost_EV_weekends}
\end{subfigure}%
\caption{\textcolor{black}{Cumulative social cost for one station.}}
\label{fig:cost_EV}
\end{figure*}

\begin{figure}[t]
    \centering
    \includegraphics[width=0.9\linewidth]{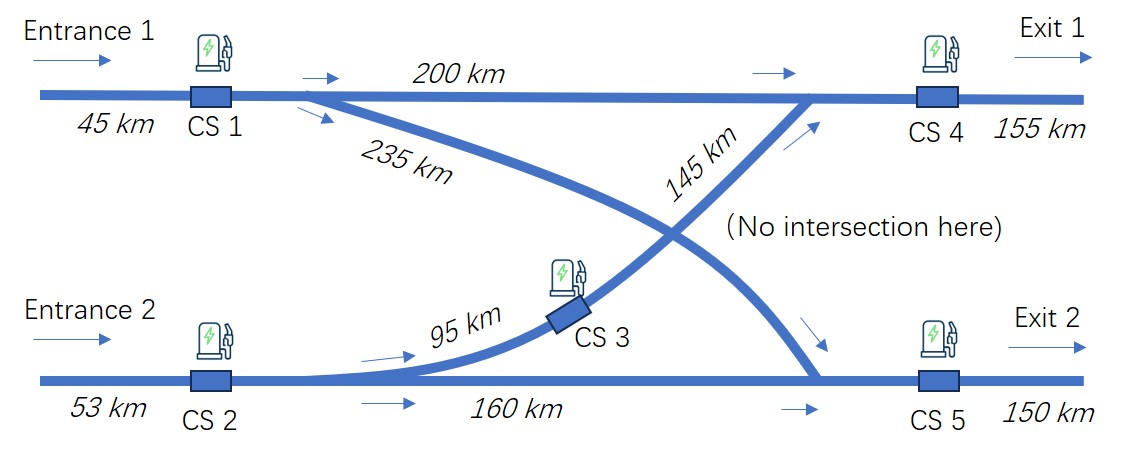}
    \caption{Simulation of the transportation network with distances between CSs: CS 1 to CS 4 is 200 km, CS 1 to CS 5 is 235 km, CS 2 to CS 3 is 95 km, CS 2 to CS 5 is 160 km, and CS 3 to CS 4 is 145 km.}
    \label{fig:traffic_network}
\end{figure}

\begin{table}
    \centering
    \begin{tabular}{cccc}
        \toprule
        Parameters & Values & Parameters  & Values \\
        \midrule
        $\alpha_i$ & 2.0 & $\lambda_i$ & 1.5 \\
        $d_0$ & $\mathcal{N}(600km,50km)$ & $d_s$ & $\mathcal{N}(400km,50km)$\\
        $V_i$ & [40,60,80,100] kWh & $soc$ &  $\mathcal{U}(20\%,100\%)$\\
        $\mu_q$ & 30 CNY/hour & $\mu_a$ & 3000 CNY \\
        $SoC_t$ & 100\% & $f(SoC)$ & uniformly in [0,1] \\
        $E_0$ & 0 kWh & $\overline{E}$ &  7 MWh \\
        $\mathbf{\overline{X}}$ & $\mathbf{4}$ Mw & $\mathbf{Pe}$ &
        \cite{SingaporeElectricityPrice} \\
        \bottomrule
    \end{tabular}
    \caption{Parameters settings.}
    \label{tab:parameter_symmetric}
\end{table}

As for EV parameters, we generate these parameters from distributions to account for the variability. In particular, the battery capacity $V_i$ is randomly selected from the set \{40 kWh, 60 kWh, 80 kWh, 100 kWh\}, while the maximum and minimum travel distances when fully charged ($d_0$, $d_s$) are drawn from normal distributions $\mathcal{N}(600km,50km)$ and $\mathcal{N}(400km,50km)$, respectively. The SoC is drawn from uniform distribution $\mathcal{U}(20\%,100\%)$. For weighting parameters $\mu_a$ and $\mu_q$, we pick $\mu_q=30$ CNY/hour and a large $\mu_a = 3000$ CNY because of the inequality~\eqref{eq:mu_a_much_greater}. Given our chosen parameters, the left-hand side of inequality~\eqref{eq:mu_a_much_greater} is approximately tenfold the right-hand side.

\subsection{EV Drivers' Cost}
\label{ssec:EV_cost}

In this experiment, we verify the cumulative total EV cost (which also reflects the negative social welfare) under different prediction methods. We simulate the EV flow from Entrance 1 and a charging station (CS) with 8 charging piles and a service price of 1.15 CNY/kWh, using our proposed methods as well as benchmark decision-making methods. The selected benchmarks are as follows:

\begin{itemize}
    \item \textbf{Residual}: A common method to leave some constant redundancy for range anxiety \cite{yang2016predicting,hu2018analyzing} to capture the bounded rationality. It assumes that charging is required when the state-of-charge falls below the energy need plus the redundancy.
    \item \textbf{o-CPT}: A bounded rational model proposed in \cite{hu2019modeling}. It captures the risk-aversion effects but does not consider the congestion caused by multiple vehicles charging at the same time.
    \item \textbf{Nra}: The models without considering bounded rationality when analyzing the EV charging behaviors \cite{yin2024research,williams2024driving}.
\end{itemize}

We then calculated the cumulative total EV cost over time, as shown in Figure~\ref{fig:cost_EV}. The social cost under our proposed methods is consistently lower than that of the benchmarks. For example, the average cumulative total cost is 33,836 CNY on weekdays with our method without destination information on the EV flow, which is only 56\% of the Nra method (60,689 CNY) and 77\% of the Residual method (44,023 CNY). Our method also reduced the social cost based on weekend data, as shown in Figure~\ref{fig:cost_EV_weekends}. This indicates that our approach can better predict the actual charging behavior of EVs, as car owners will always opt for decisions that minimize their own costs.

Additionally, we find that the cumulative cost with destination distribution information in EV flow is consistently better than without this information. This indicates that EV owners can make more informed decisions when they have access to more information. For highway operators, better predictions can be made by using questionnaires to investigate whether owners are familiar with the distribution of destinations in the traffic flow.

\subsection{Rationality Effects on EV Charging Decision}

In this experiment, we illustrate the rational sensitive on the charging decision.
To better show the impact on EV charging decisions, we remove EV randomness and simulate a symmetric 5-EV scenario. The EV parameters are the same in Table~\ref{tab:parameter_symmetric}, expect we removed randomness and set $d_0 = 600 km$, $d_s = 400 km$, $V_i = 80 kWh$. The queuing time for previous EVs $t_0$ is set to 2.5 hours.

We vary the risk aversion tendency (rationality parameter) $\lambda$ from 1 to 5, and calculate the charging decision points $\hat{SoC}$. The simulation result is shown in Figure~\ref{fig:charging_decision_points_ra}. We also show the case of perfect rational as benchmark in Figure~\ref{fig:charging_decision_points_nra}.

\begin{figure*}[t]
    \centering
    \begin{subfigure}{\columnwidth}
        \centering
        \includegraphics[width=0.9\linewidth]{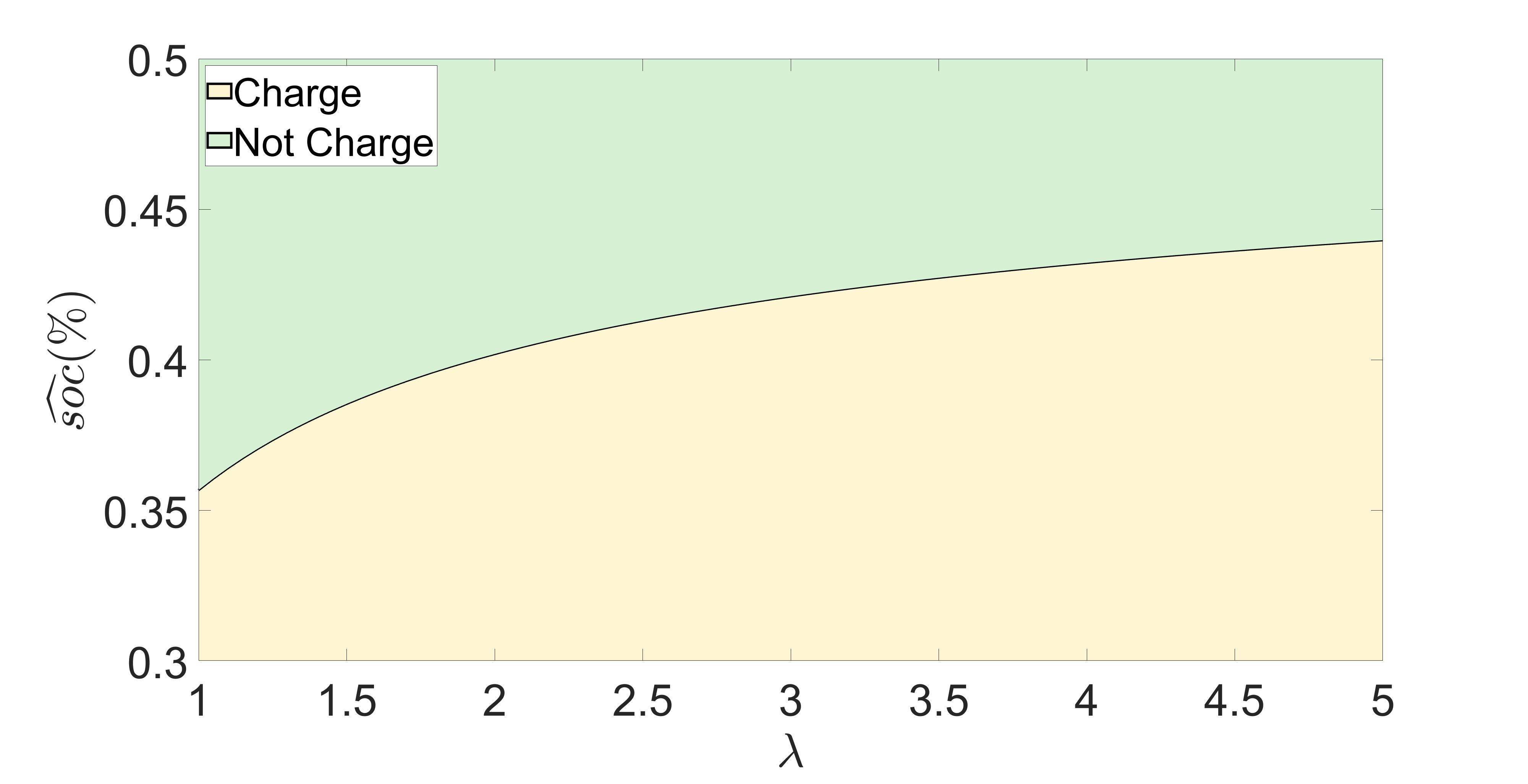}
        \caption{With bounded rationality.}
        \label{fig:charging_decision_points_ra}
    \end{subfigure}
    \hfill
    \begin{subfigure}{\columnwidth}
        \centering
        \includegraphics[width=0.9\linewidth]{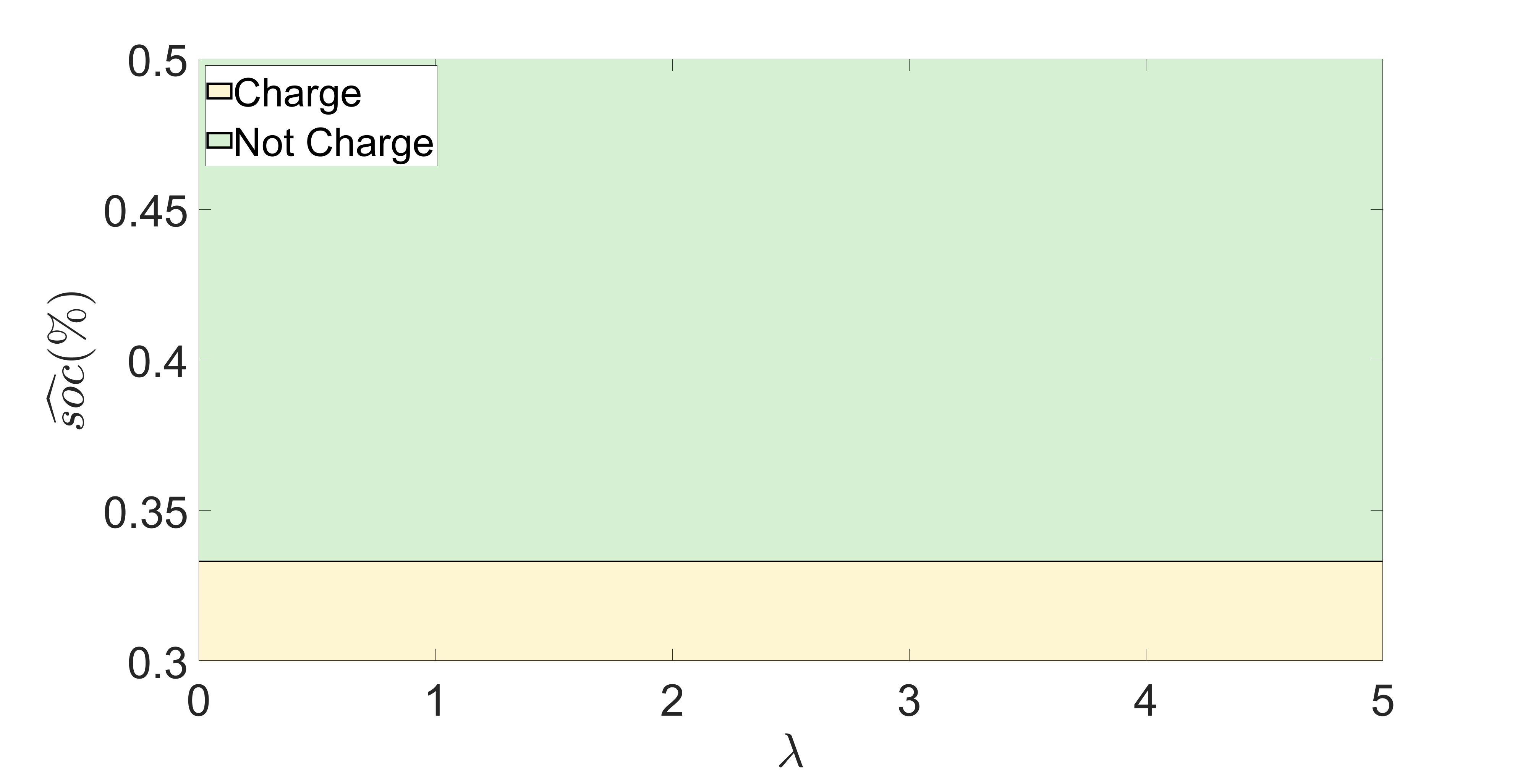}
        \caption{Without bounded rationality.}
        \label{fig:charging_decision_points_nra}
    \end{subfigure}
    \caption{Charging decisions as rationality varies.}
    \label{fig:charging_decision_points}
\end{figure*}

\begin{figure*}[t]
\centering
\begin{subfigure}[t]{0.33\textwidth}
  \includegraphics[width=0.9\linewidth]{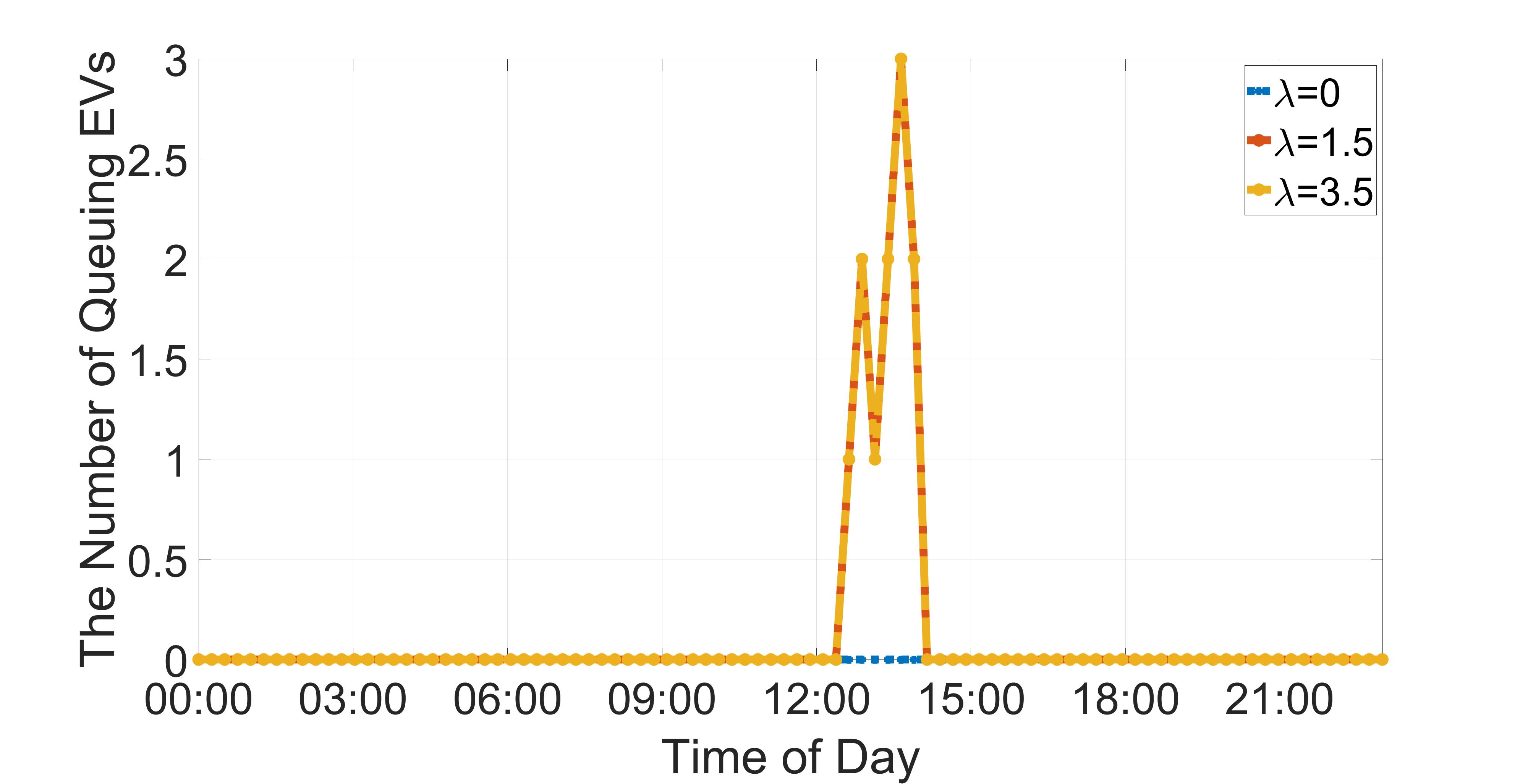}
  \caption{CS 1}
  \label{fig:queuing_num_cs1}
\end{subfigure}%
\hfill
\begin{subfigure}[t]{0.33\textwidth}
  \includegraphics[width=0.9\linewidth]{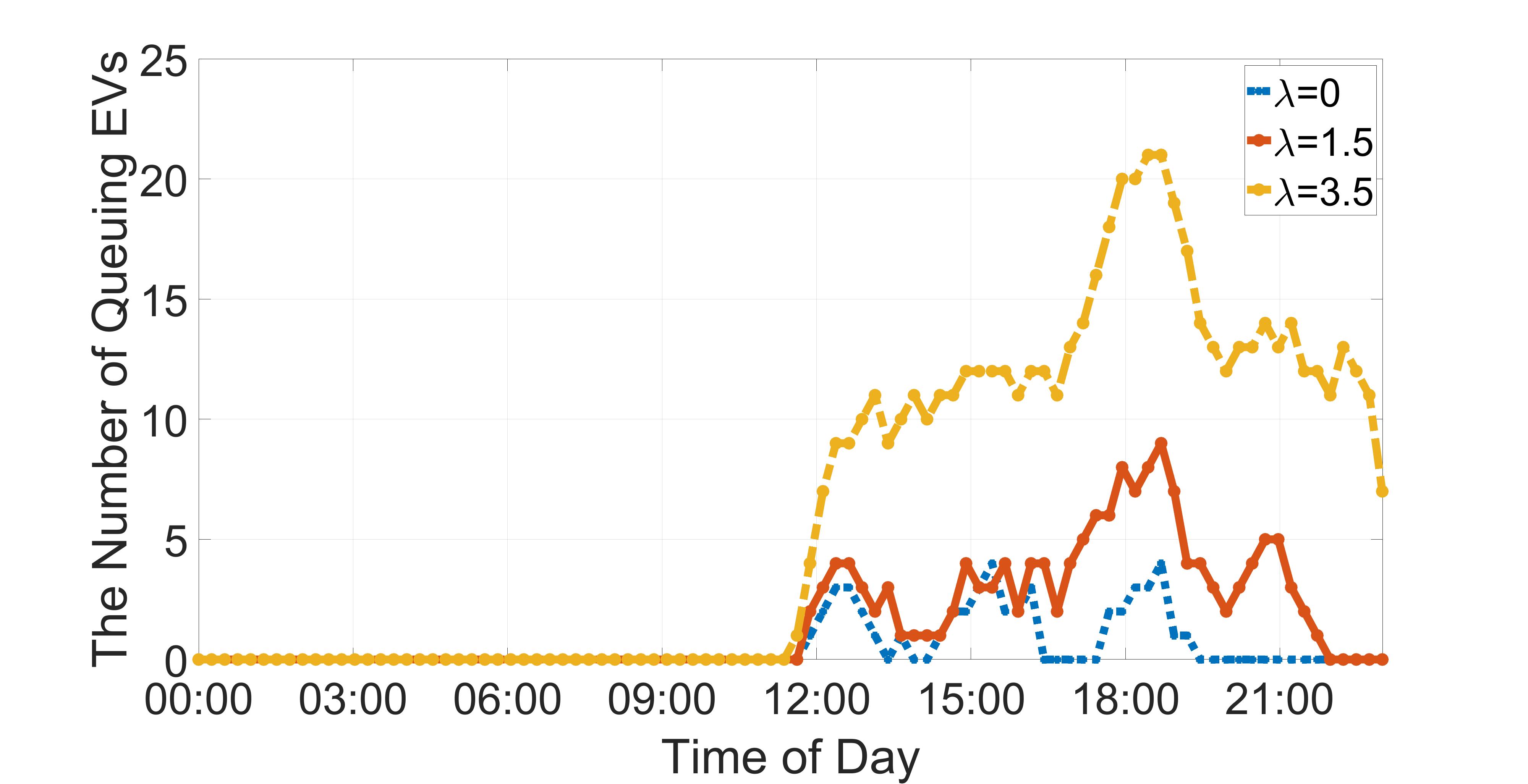}
  \caption{CS 2}
  \label{fig:queuing_num_cs2}
\end{subfigure}%
\hfill
\begin{subfigure}[t]{0.33\textwidth}
  \includegraphics[width=0.9\linewidth]{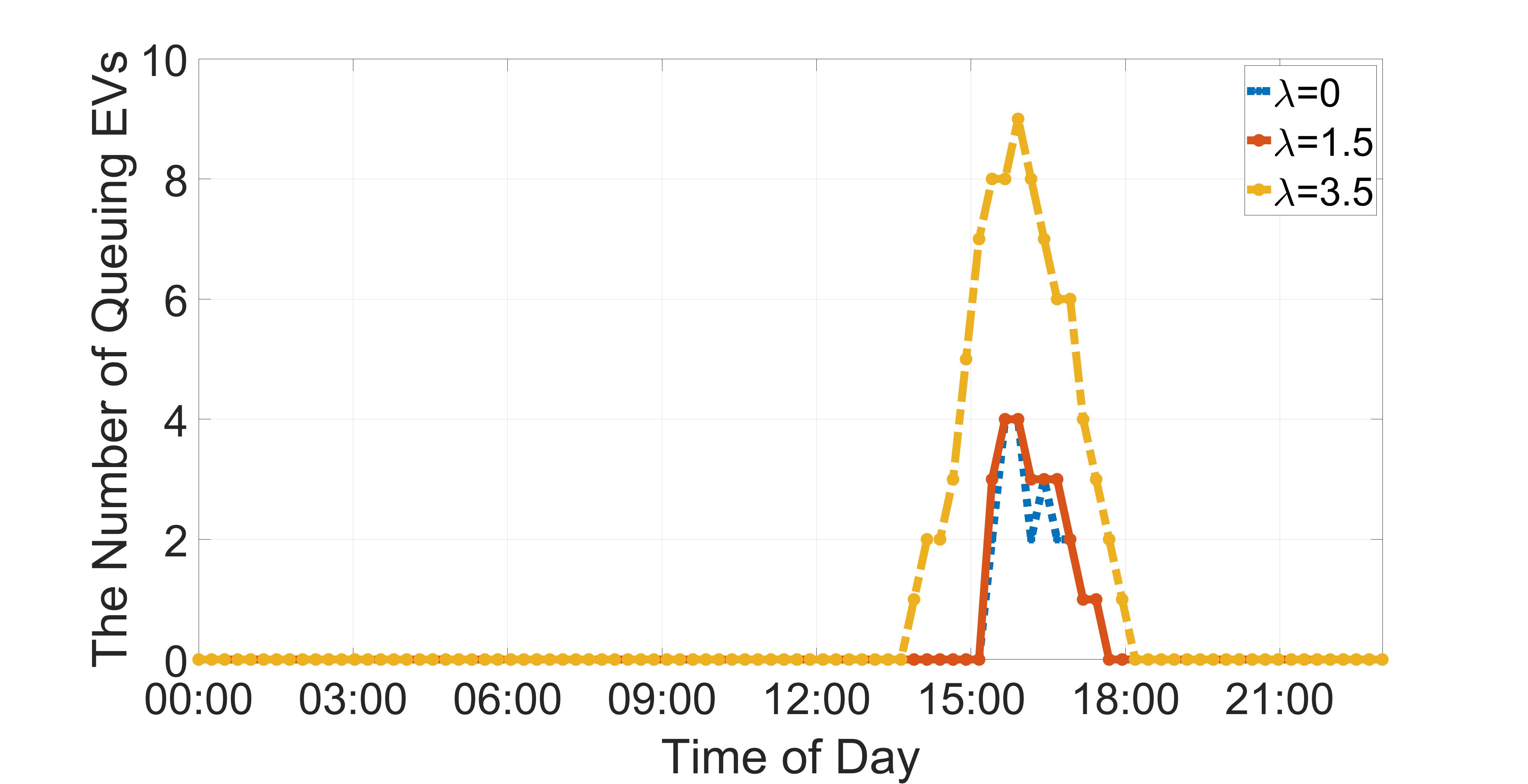}
  \caption{CS 3}
  \label{fig:queuing_num_cs3}
\end{subfigure}%

\hspace*{\fill}
\begin{subfigure}[t]{0.33\textwidth}
  \includegraphics[width=0.9\linewidth]{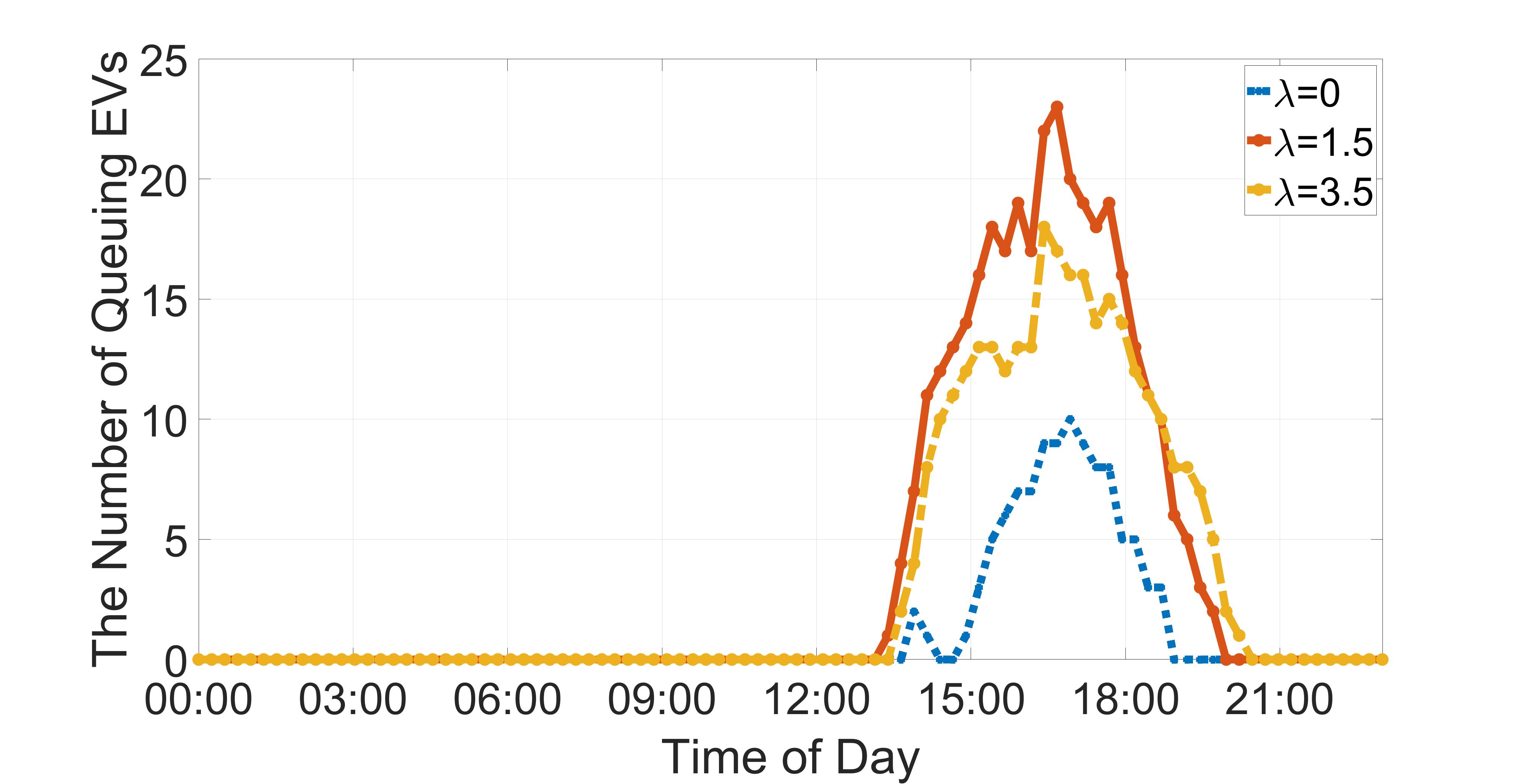}
  \caption{CS 4}
  \label{fig:queuing_num_cs4}
\end{subfigure}%
\hfill
\begin{subfigure}[t]{0.33\textwidth}
  \includegraphics[width=0.9\linewidth]{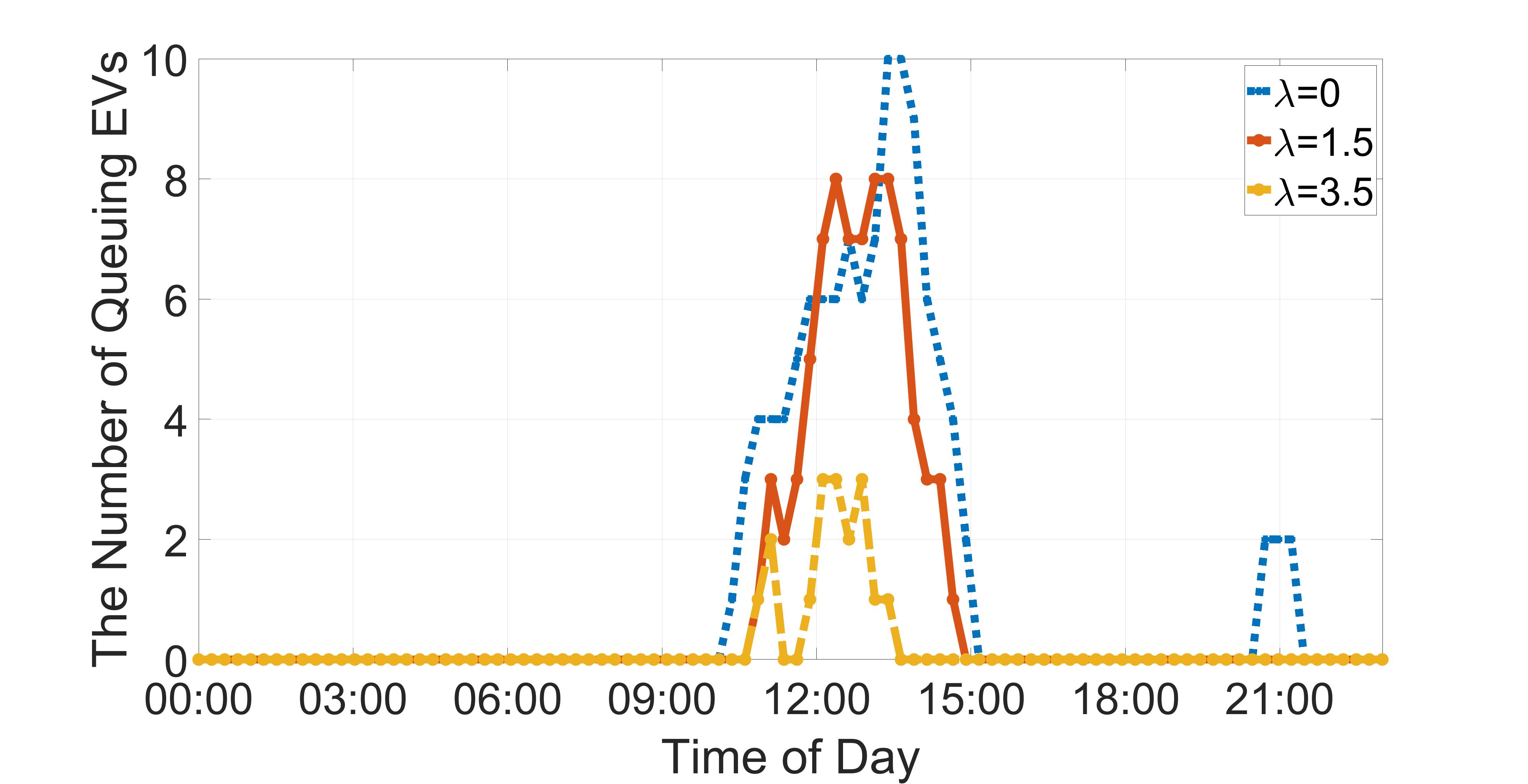}
  \caption{CS 5}
  \label{fig:queuing_num_cs5}
\end{subfigure}%
\hspace*{\fill}
\caption{Queuing length as rationality varies.}
\label{fig:queuing_num}
\end{figure*}

As depicted in Figure~\ref{fig:charging_decision_points}, we observe that the charging decision points differ remarkably by different risk aversion tendency. For example, EV driver will charge if his state-of-charge is lower than 41.8\% when $\lambda=2$, but the decision point is $33.3\%$ if we ignore the range anxiety. Besides, we observe the charging decision point are significantly influenced by drivers' risk aversion tendencies. For example, the decision point increases from 41.8\% to 45.7\% as $\lambda$ increases from 2 to 5. This result illustrates that range aversion tendency has a significant impact on the charging decision among EV drivers on the highway.

\subsection{Rationality Effects on Charging Station Parameters}
\subsubsection{Queuing Length}

We show in Figure~\ref{fig:queuing_num} the queues at each charging station with different range anxiety of EVs $\lambda$. As shown in Figure 1, the increasing anxiety of EV owners significantly affects the queue length at the charging stations. For example, in CS 2 (Figure~\ref{fig:queuing_num_cs2}), the queue length at 18:45 is 21 when $\lambda=3.5$, which is nearly double the length when $\lambda=1.5$ (9) and four times the length when perfect rational ($\lambda=0$) (4). A similar trend is observed in CS 3 (Figure~\ref{fig:queuing_num_cs3}), but CS 4 and CS 5 exhibit slightly different patterns. In CS 4, the queue length under $\lambda=1.5$ exceeds that under $\lambda=3.5$ between 12:00 and 18:30. In CS 5, the queue length under $\lambda=0$ is higher than that under $\lambda=3.5$ from 10:15 to 15:00. This discrepancy arises because CS 4 and CS 5 are located farther from the entrance in the road network, leading more EV owners to opt for charging at earlier stations when $\lambda$ increases, thereby reducing the number of vehicles charging at CS 4 and CS 5.

\subsubsection{Charging Demand}

\begin{figure*}[t]
\centering
\begin{subfigure}[t]{0.33\textwidth}
  \includegraphics[width=\linewidth]{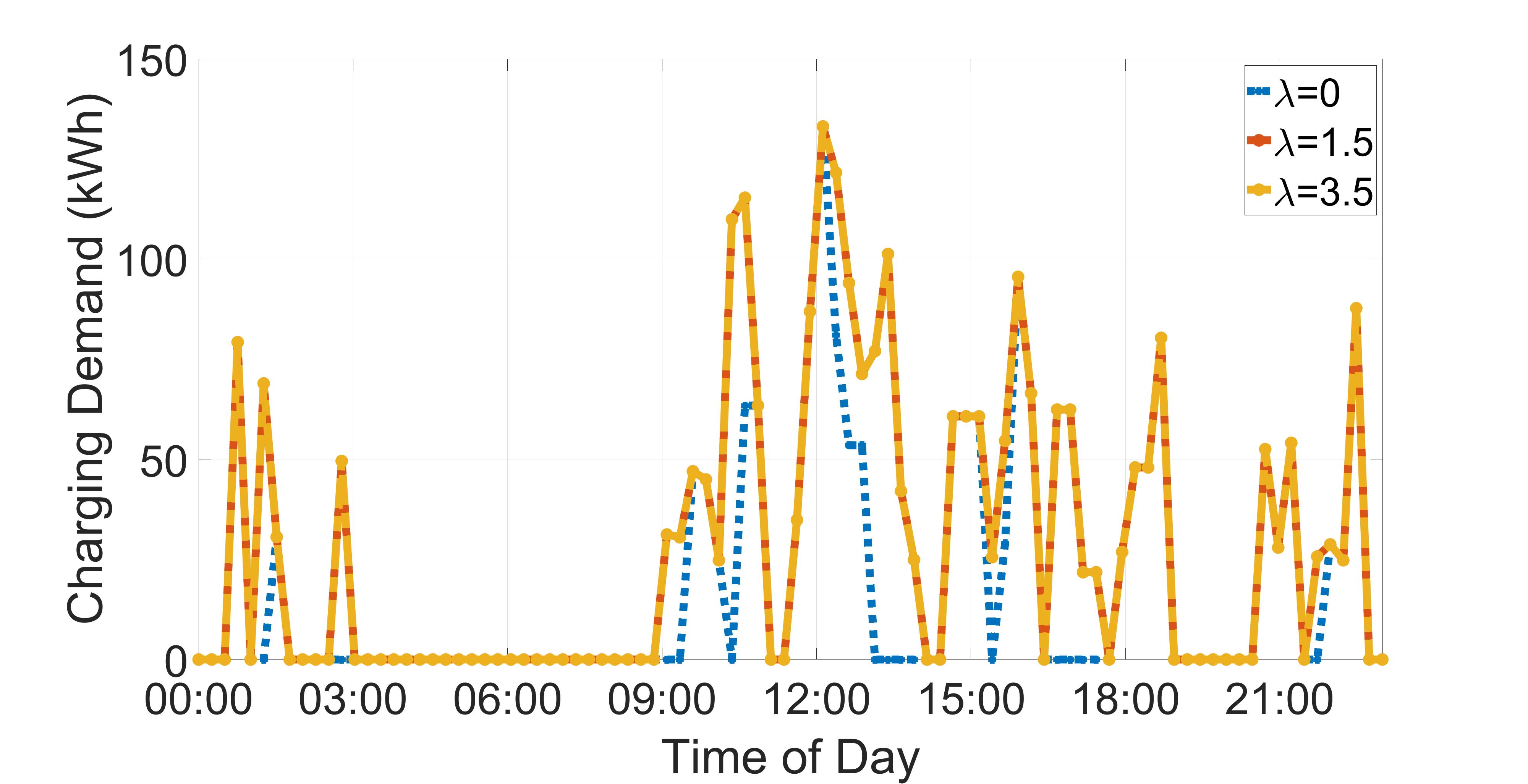}
  \caption{CS 1}
  \label{fig:charging_demand_cs1}
\end{subfigure}%
\hfill
\begin{subfigure}[t]{0.33\textwidth}
  \includegraphics[width=\linewidth]{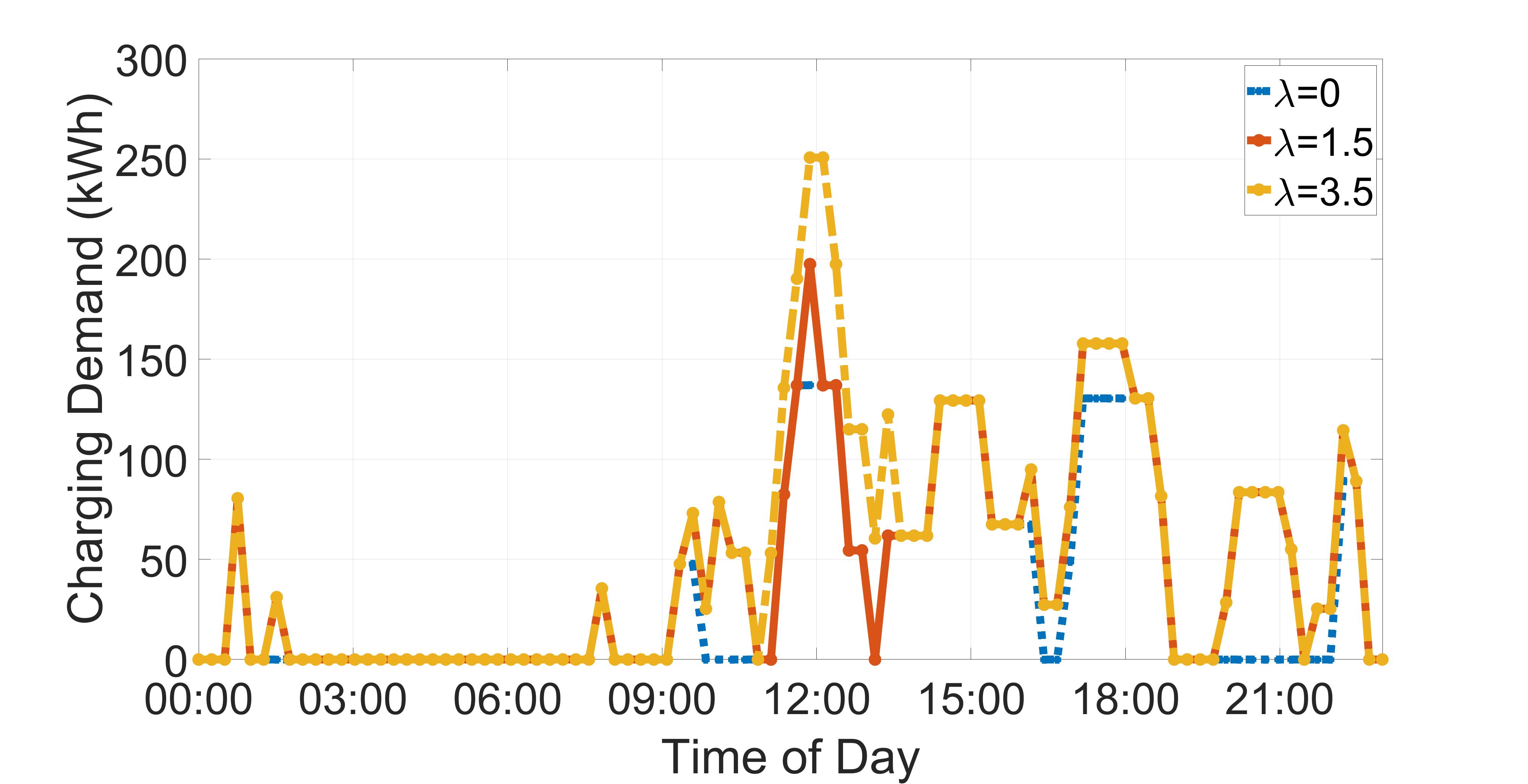}
  \caption{CS 2}
  \label{fig:charging_demand_cs2}
\end{subfigure}%
\hfill
\begin{subfigure}[t]{0.33\textwidth}
  \includegraphics[width=0.9\linewidth]{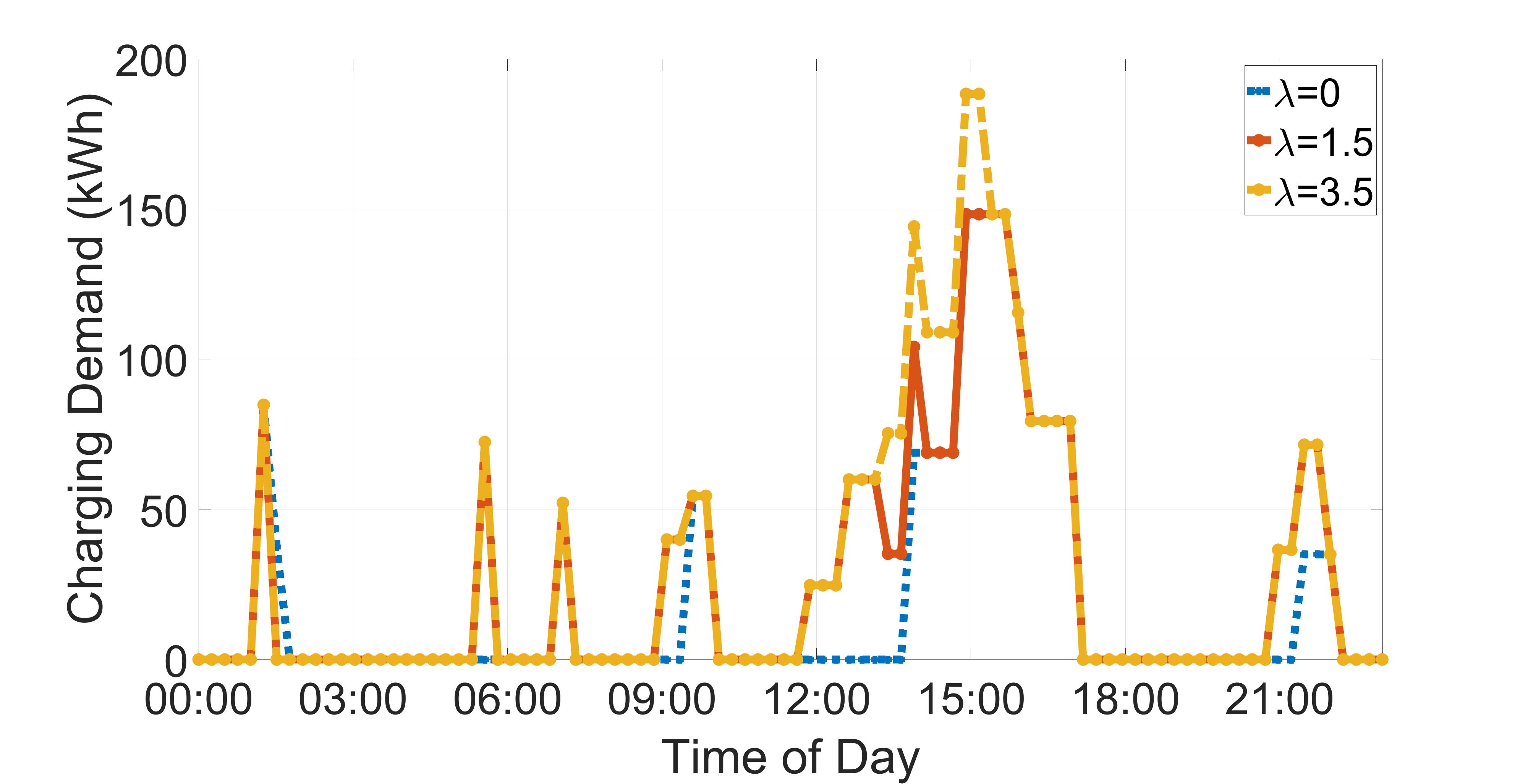}
  \caption{CS 3}
  \label{fig:charging_demand_cs3}
\end{subfigure}%

\hspace*{\fill}
\begin{subfigure}[t]{0.33\textwidth}
  \includegraphics[width=0.9\linewidth]{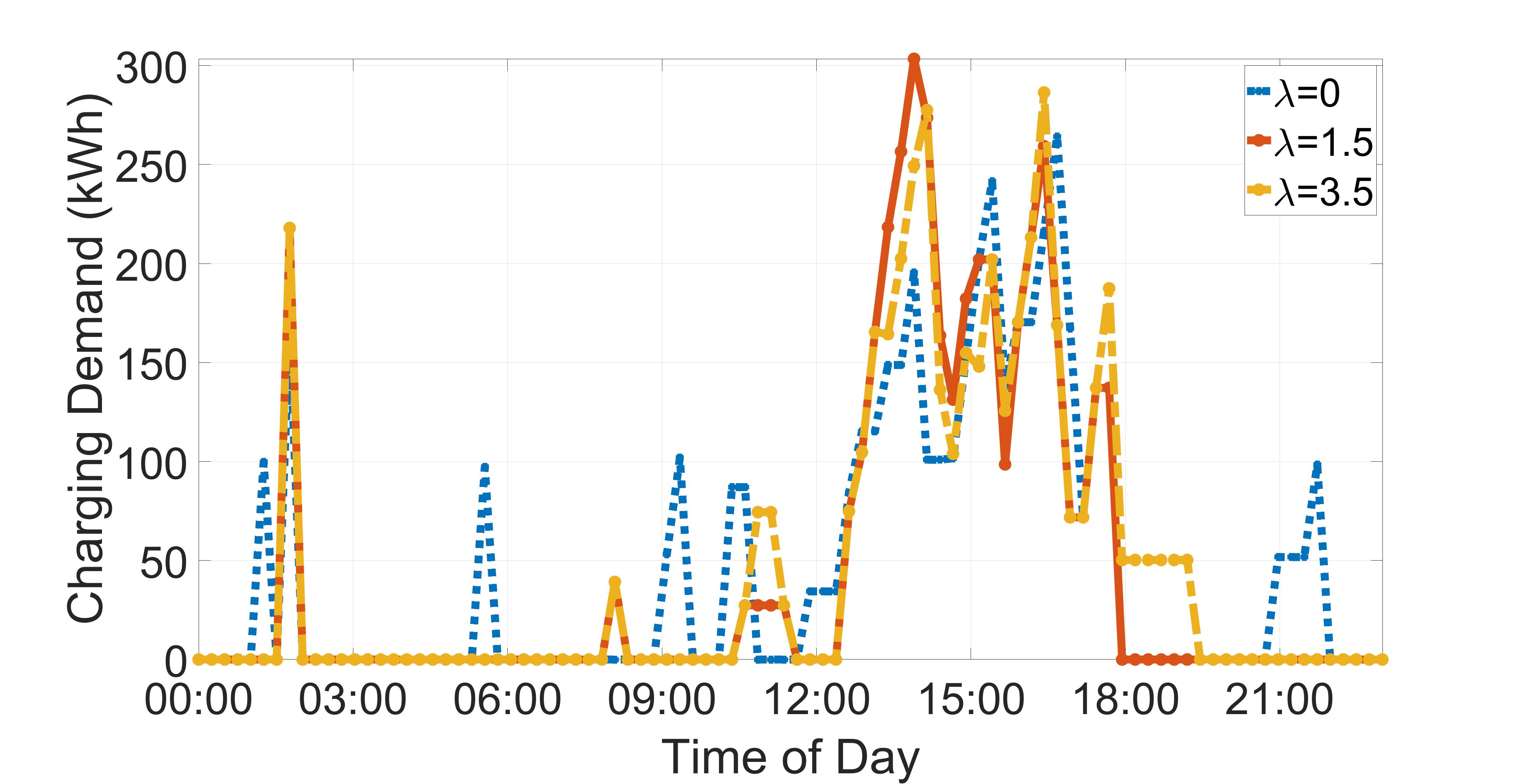}
  \caption{CS 4}
  \label{fig:charging_demand_cs4}
\end{subfigure}%
\hfill
\begin{subfigure}[t]{0.33\textwidth}
  \includegraphics[width=0.9\linewidth]{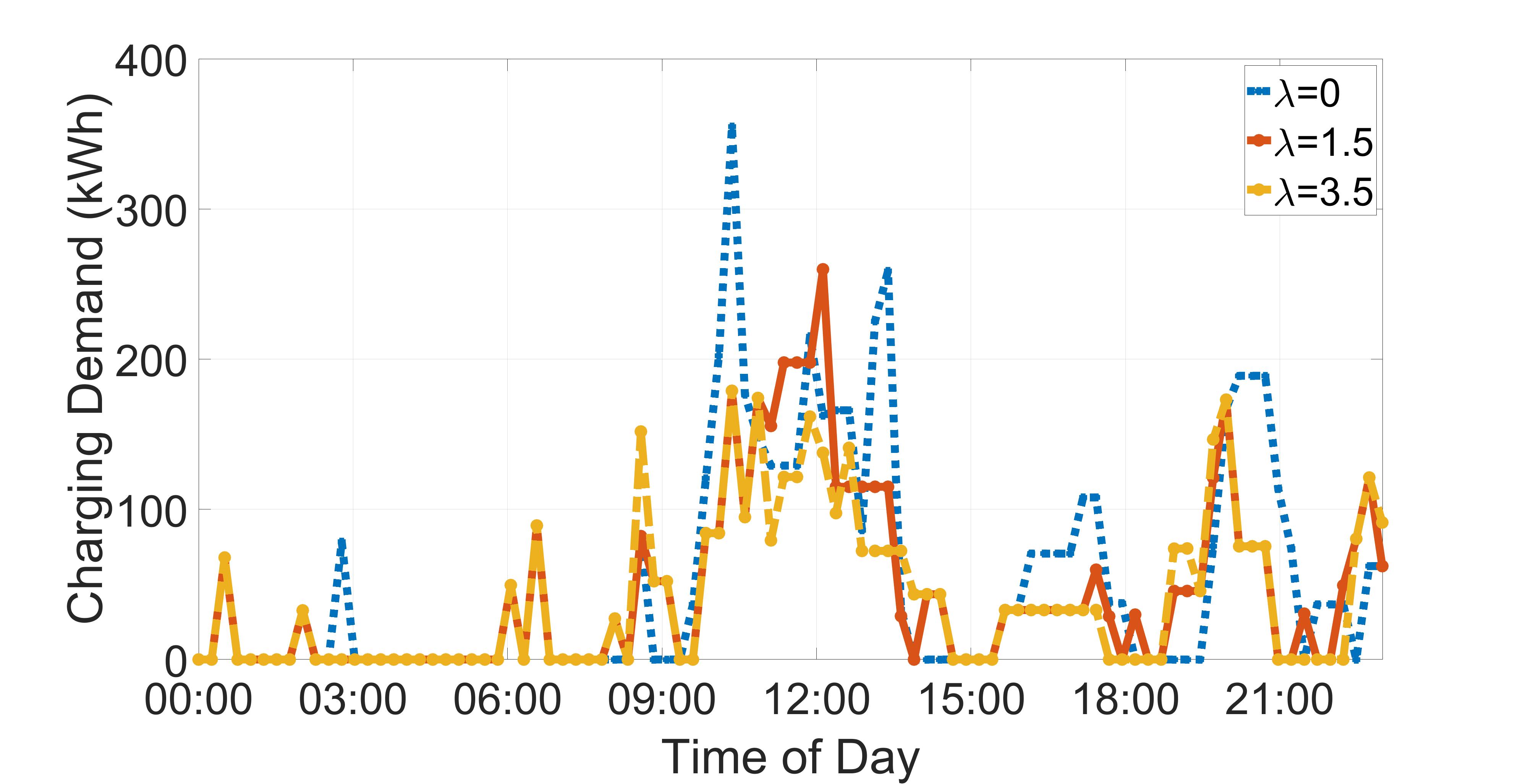}
  \caption{CS 5}
  \label{fig:charging_demand_cs5}
\end{subfigure}%
\hspace*{\fill}
\caption{Charging demand as rationality varies.}
\label{fig:charging_demand}
\end{figure*}

Figure~\ref{fig:charging_demand} illustrates the variation in charging demand at charging stations within the road network under different risk aversion preferences. As depicted, the risk aversion preferences of EV owners significantly impact the stations. For instance, at CS 2, the charging demand between 11:45 and 12:00 is approximately 140 kWh when perfect rational ($\lambda=0$), increasing to around 200 kWh when $\lambda=1.5$, and 250 kWh when $\lambda=3.5$. For charging stations located farther from the entrance, the charging demand may decrease as aversion preference increases. For example, at CS 5, the charging demand is approximately 350 kWh when $\lambda=0$, but decreases to around 180 kWh when $\lambda=3.5$. This difference arises because EV owners with higher risk aversion tend to prefer charging at the front stations (CS 1, CS 2, and CS 3), thereby reducing the charging demand at CS 4 and CS 5.

Interestingly, we observe the magnitude of change in charging demand is not as pronounced as in queuing status. For example, at CS 2, the ratio of queue lengths can be as high as 4:2:1 in the most significant hours (18:45), but the ratio of charging demand in the most significant hours is only 25:25:14 (11:45). This is because some of the EVs would have liked to charge one more time, but they decided to leave when they saw the long EV queue.

\subsection{Rationality Effects on Transportation Network Parameters}

We show the departure rates of road network exits under different risk averse preferences of EV owners.The departure rates for Exit 1, 2 are shown in Figure~\ref{fig:departure_rate_exit1}, Figure~\ref{fig:departure_rate_exit2}, respectively.

As illustrated in the figures, the duration of peak hours for the departure rate extends with the increase in owners’ range anxiety tendency. Specifically, the period with more than 10 vehicles extends from 12 to 14 hours at Exit 1 (Figure~\ref{fig:departure_rate_exit1}), and from 7 to 8 hours at Exit 2 (Figure~\ref{fig:departure_rate_exit2}). Additionally, the departure rate curve becomes smoother. As depicted in Figure~\ref{fig:departure_rate_exit1}, with heightened anxiety aversion, the maximum gap between the number of vehicles departing in three consecutive time periods decreases from 15 to 11, and then to 9 at Exit 1. Similarly, at Exit 2, this maximum gap reduces from 16 to 11, and subsequently to 6.

\subsection{Effectiveness of BNE Algorithms}

In this section, we verify the effectiveness of proposed NE solving algorithms. According to the definition of BNE \eqref{eq:BNE_definition}, no one can unilaterally change the strategy to raise the expectation of payoff. In this experiment, we allow each EV to change his strategy in the BNE obtained by Algorithm~\ref{alg:solve_BNE_normal} and Algorithm~\ref{alg:solve_BNE_traffic} and calculate the cost expectation before and after the change. For the parameter settings, we use the A1 highway traffic crossing the CS in Experiment~\ref{ssec:EV_cost}. We take the BNE at 3:00 PM when the traffic is heavy. The results are shown in Figure~\ref{fig:NE_effective}.

\begin{figure*}[t]
\centering
\begin{subfigure}[t]{0.5\textwidth}
        \centering
  \includegraphics[width=0.9\linewidth]{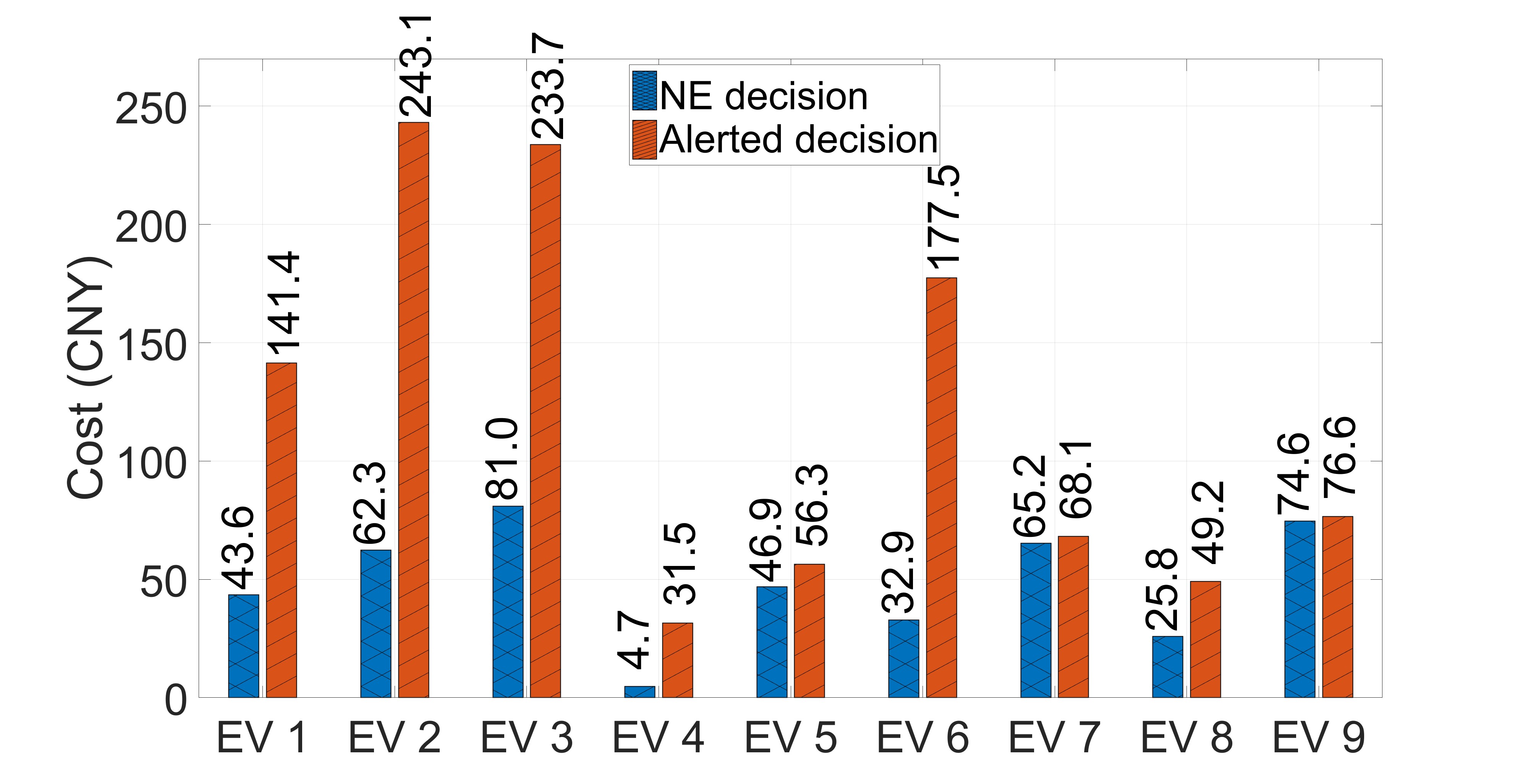}
  \caption{Without destination distribution knowledge (Algorithm~\ref{alg:solve_BNE_normal}). \\NE decisions: EV 1,2,3,6,8,9 charge, EV 4,5,7 not charge.}
  \label{fig:NE_effective_normal}
\end{subfigure}%
\hfill
\begin{subfigure}[t]{0.5\textwidth}
        \centering
  \includegraphics[width=0.9\linewidth]{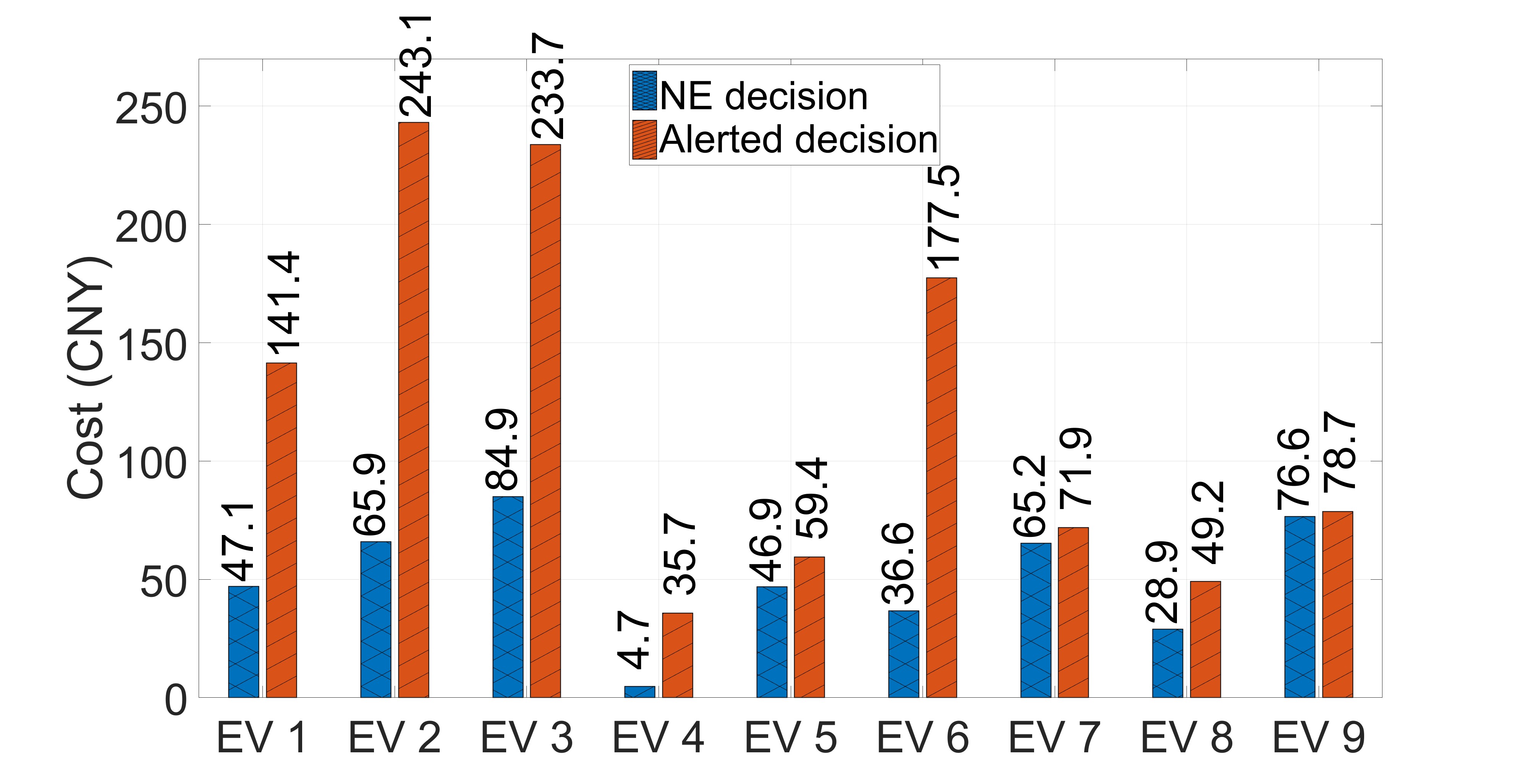}
  \caption{With destination distribution knowledge (Algorithm~\ref{alg:solve_BNE_traffic}). \\NE decisions: EV 1,2,3,6,8 charge, EV 4,5,7,9 not charge.}
  \label{fig:NE_effective_traffic}
\end{subfigure}%
\caption{\textcolor{black}{Effectiveness of NE Algorithm.}}
\label{fig:NE_effective}
\end{figure*}

As shown in Figure~\ref{fig:NE_effective}, each EV has an increase in cost after the change decision. This shows that the BNE obtained by our algorithm is effective.

\subsection{Benefits of accurate prediction}
\label{ssec:benefits_cs}
In Section~\ref{sec:benefit_cs}, we show the impact of accurate predictions of EV behavior on charging station revenues. In this section, we compare charging stations using a different prediction method to ours.

\begin{figure*}[t]
\centering
\begin{subfigure}[t]{0.33\textwidth}
  \includegraphics[width=0.9\linewidth]{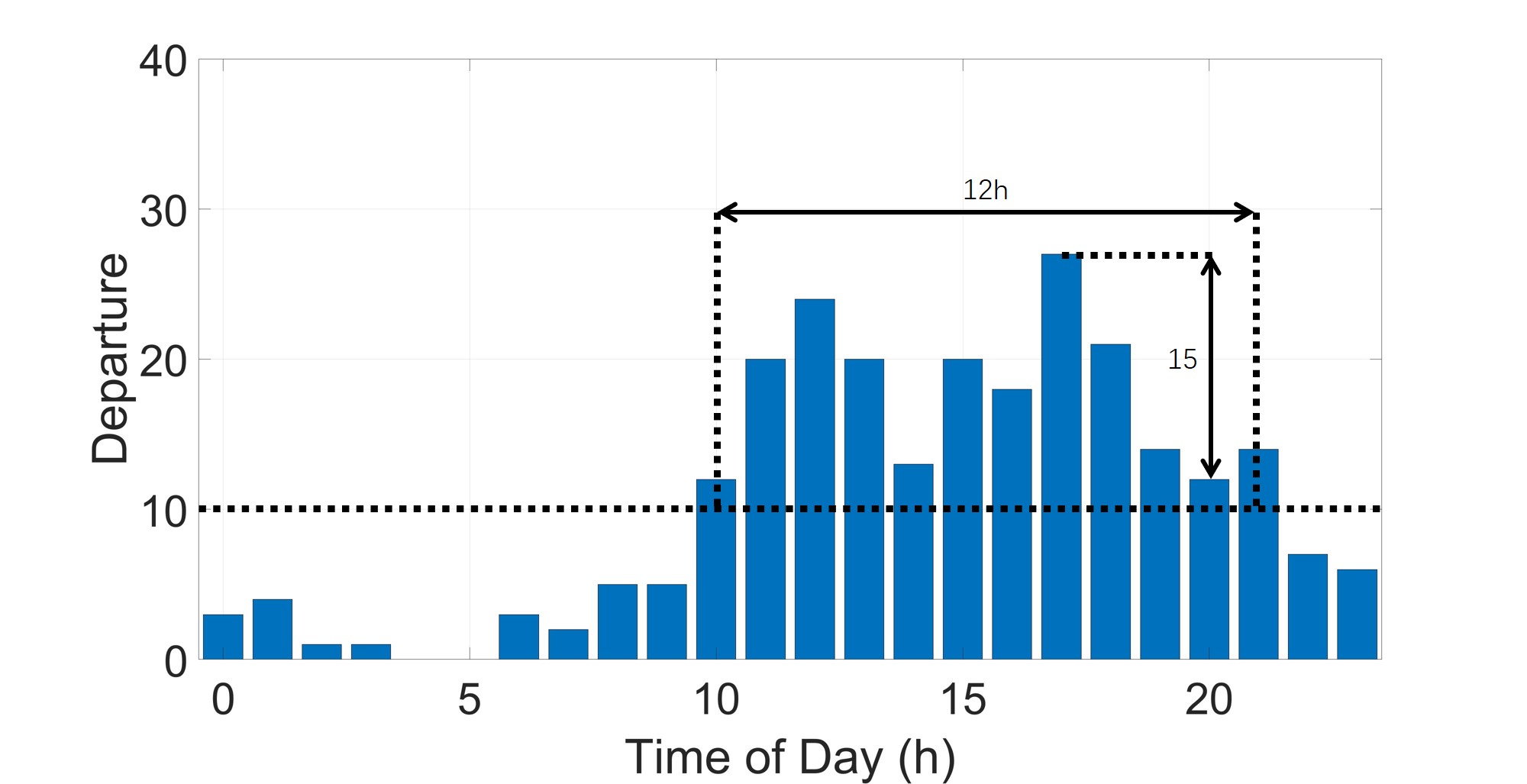}
  \caption{$\lambda=0$}
  \label{fig:departure_rate_exit2_no}
\end{subfigure}%
\hfill
\begin{subfigure}[t]{0.33\textwidth}
  \includegraphics[width=0.9\linewidth]{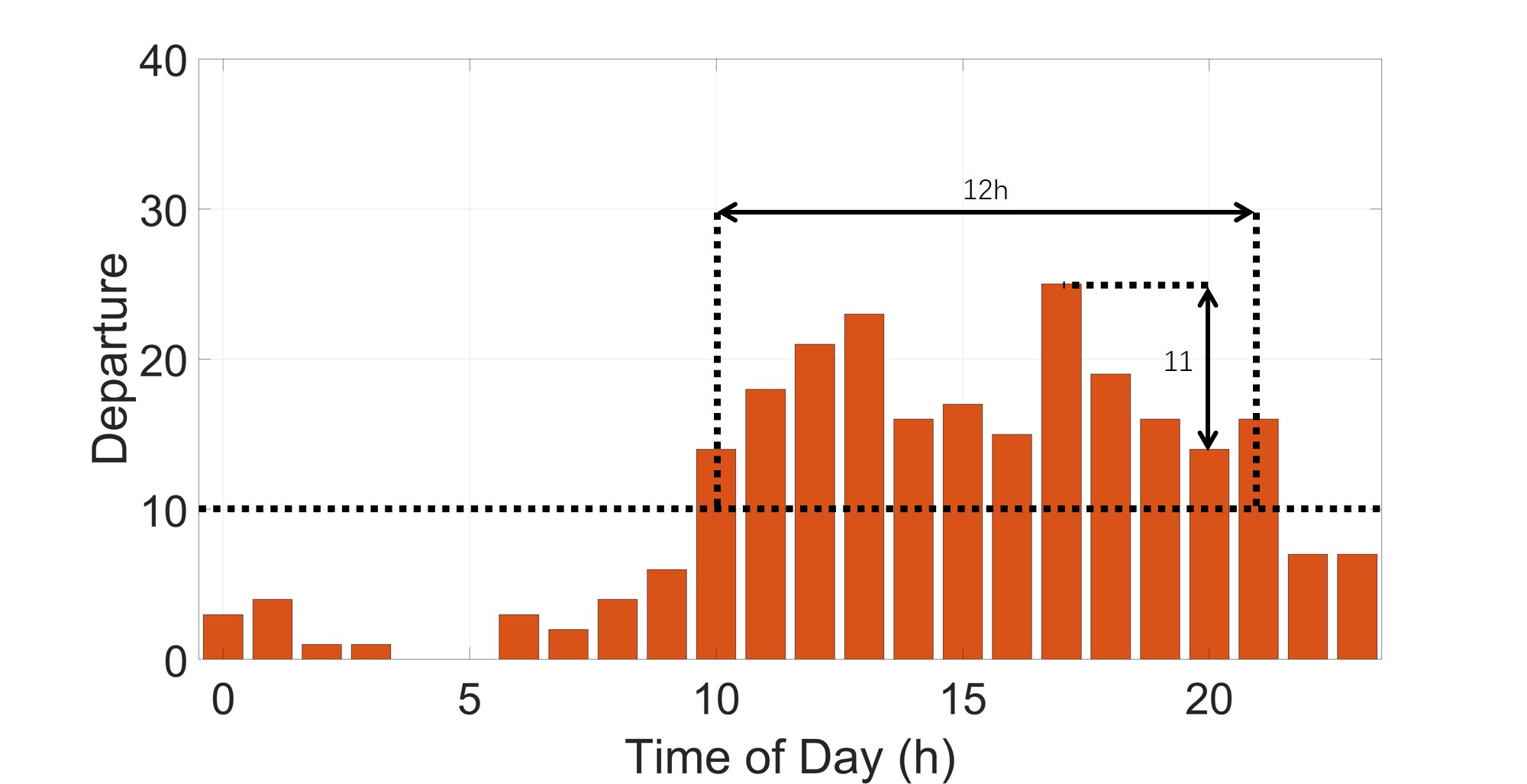}
  \caption{$\lambda=1.5$}
  \label{fig:departure_rate_exit2_low}
\end{subfigure}%
\hfill
\begin{subfigure}[t]{0.33\textwidth}
  \includegraphics[width=0.9\linewidth]{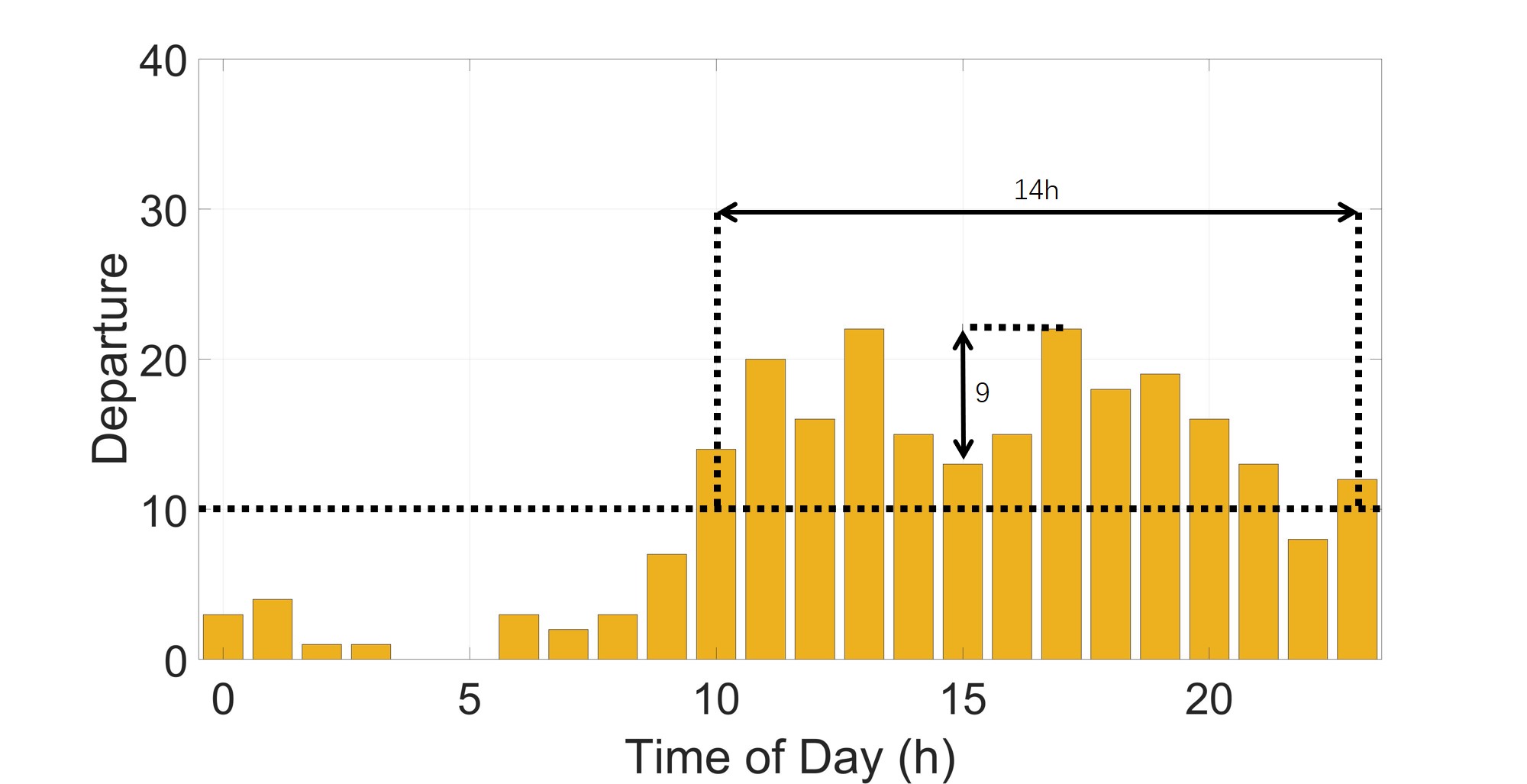}
  \caption{$\lambda=3.5$}
  \label{fig:departure_rate_exit2_high}
\end{subfigure}%
\caption{Departure rate of exit 2 as rationality varies.}
\label{fig:departure_rate_exit2}
\end{figure*}

\begin{figure*}[t]
\centering
\begin{subfigure}[t]{0.33\textwidth}
  \includegraphics[width=0.9\linewidth]{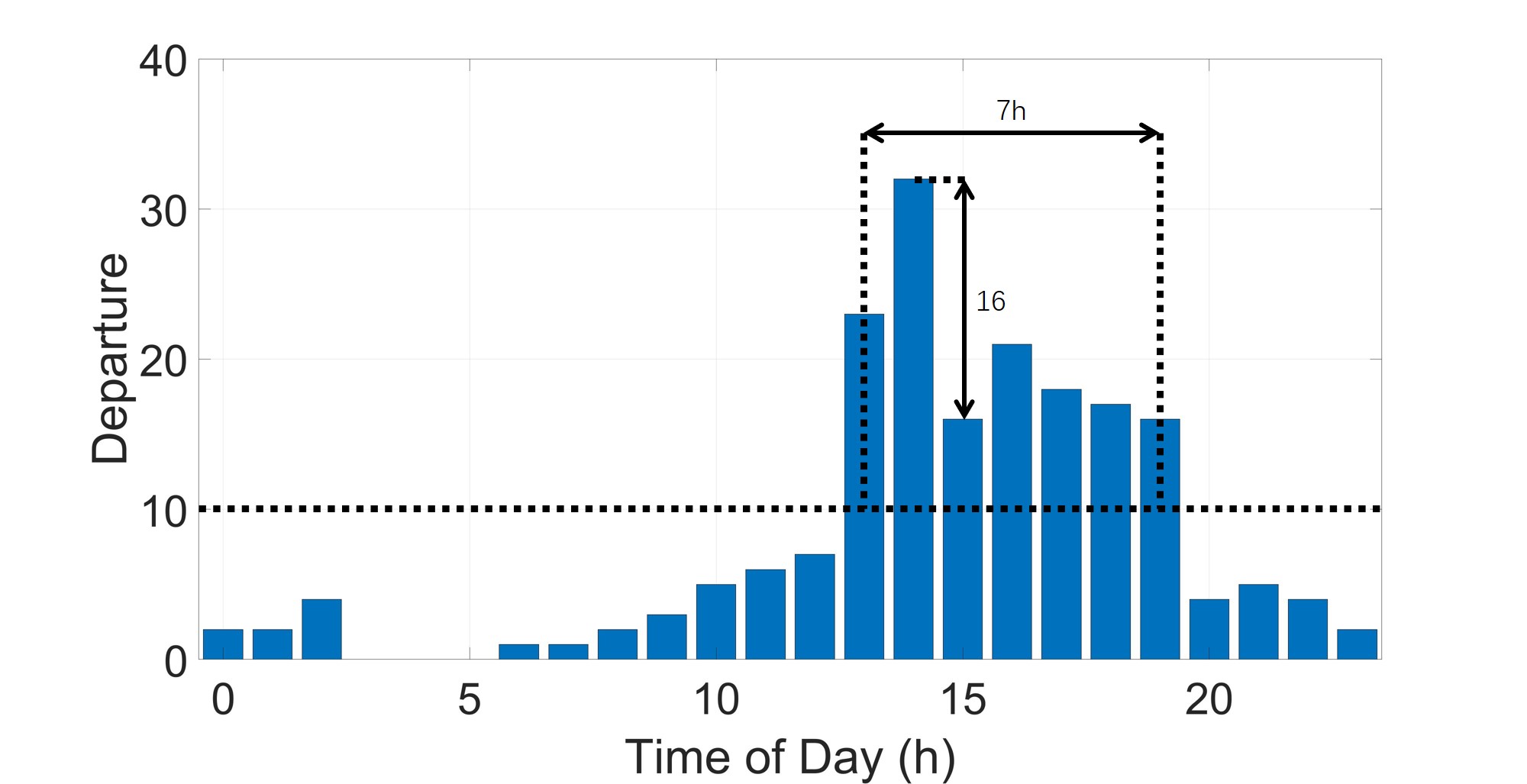}
  \caption{$\lambda=0$}
  \label{fig:departure_rate_exit1_no}
\end{subfigure}%
\hfill
\begin{subfigure}[t]{0.33\textwidth}
  \includegraphics[width=0.9\linewidth]{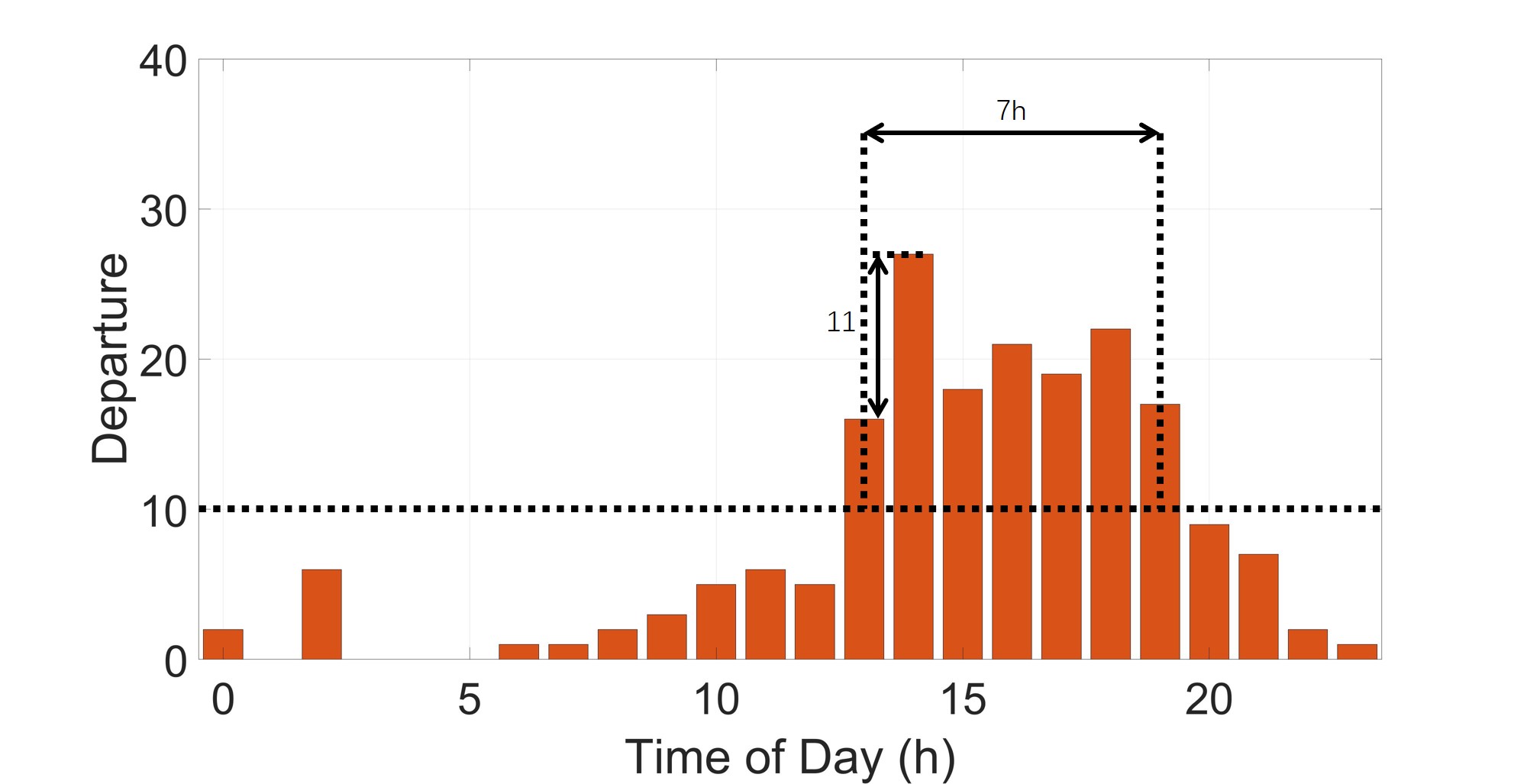}
  \caption{$\lambda=1.5$}
  \label{fig:departure_rate_exit1_low}
\end{subfigure}%
\hfill
\begin{subfigure}[t]{0.33\textwidth}
  \includegraphics[width=0.9\linewidth]{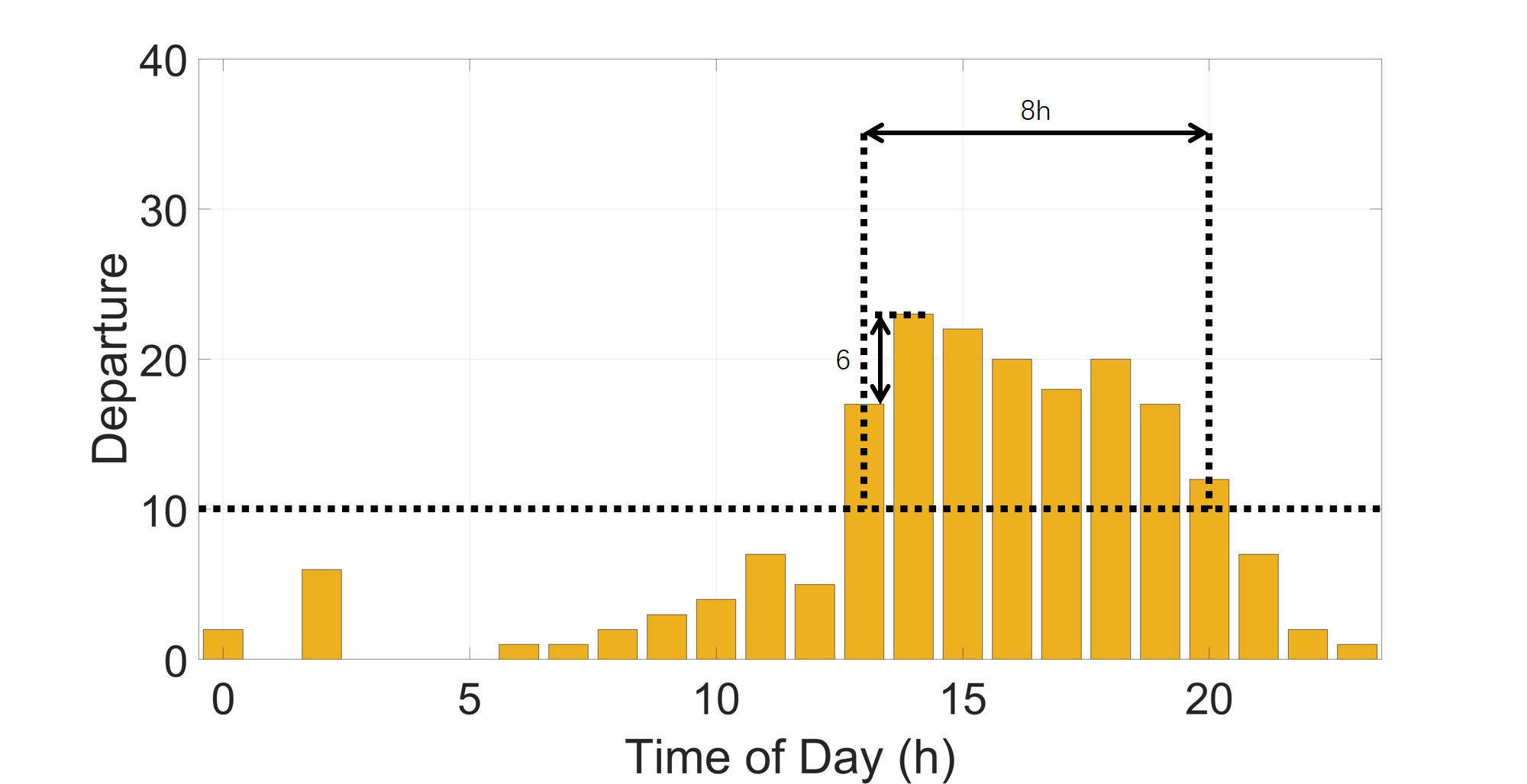}
  \caption{$\lambda=3.5$}
  \label{fig:departure_rate_exit1_high}
\end{subfigure}%
\caption{Departure rate of exit 1 as rationality varies.}
\label{fig:departure_rate_exit1}
\end{figure*}

In this experiment, we employed our proposed method alongside various benchmarks as predictive values to determine the optimal day-ahead charging strategy for the charging station, as described by Equation~\eqref{eq:CS_optimization}. \color{black} In this simulation, we use \textit{linprog} toolbox in MATLAB to solve the linear optimization problem \eqref{eq:CS_optimization}. \color{black} The range anxiety parameter was set in accordance with the benchmark established by \cite{hu2019modeling}. Subsequently, we calculated the average daily charging cost of the charging station under different day-ahead power purchase strategies. In scenarios where the charging demand could not be met, we assumed that the charging station would procure power in the real-time market as an emergency measure. The real-time market electricity price was assumed to be 1.2 times the day-ahead price, with an additional emergency processing fee of 50 CNY per visit. For the electricity tariff setting $Pe$, we utilized the Singapore vesting contract energy price from July 8, 2024 \cite{SingaporeElectricityPrice}. \color{black} Under above parameters, we calculate the optimal charging decisions with Equation~\eqref{eq:CS_optimization} corresponding to different predicted demand $D_t$ under the proposed methods and benchmarks. Then, we calculate the daily charging cost $\mathbf{X}^\intercal \mathbf{Pe}$ under different prediction methods. \color{black} The results of the daily charging cost analysis are presented in Figure~\ref{fig:cost_cs}.

\begin{figure*}[t]
\centering
\begin{subfigure}[t]{0.5\textwidth}
    \centering
    \includegraphics[width=0.9\linewidth]{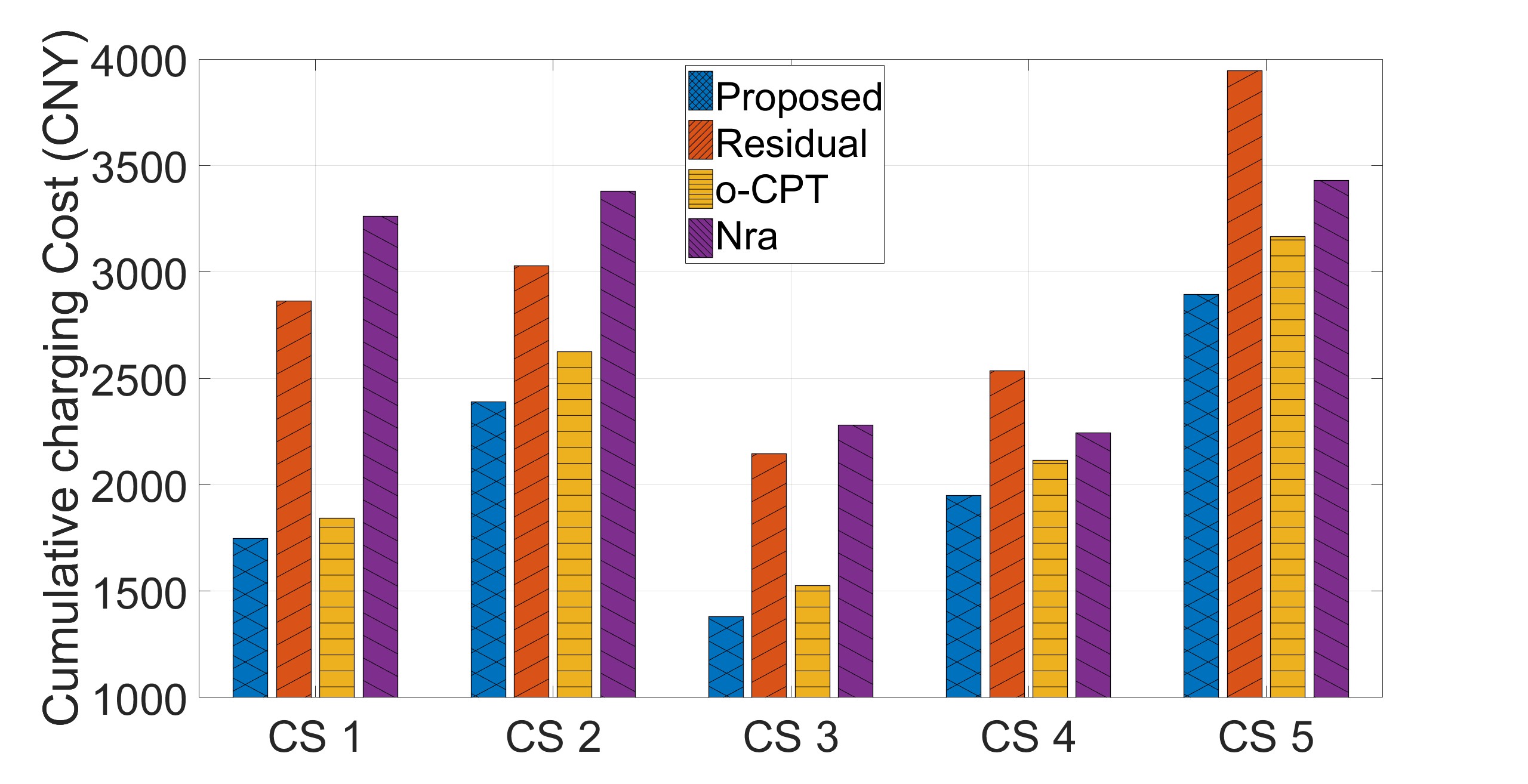}
    \caption{Weekdays.}
    \label{fig:cost_cs_weekdays}
\end{subfigure}%
\hfill
\begin{subfigure}[t]{0.5\textwidth}
    \centering
  \includegraphics[width=0.9\linewidth]{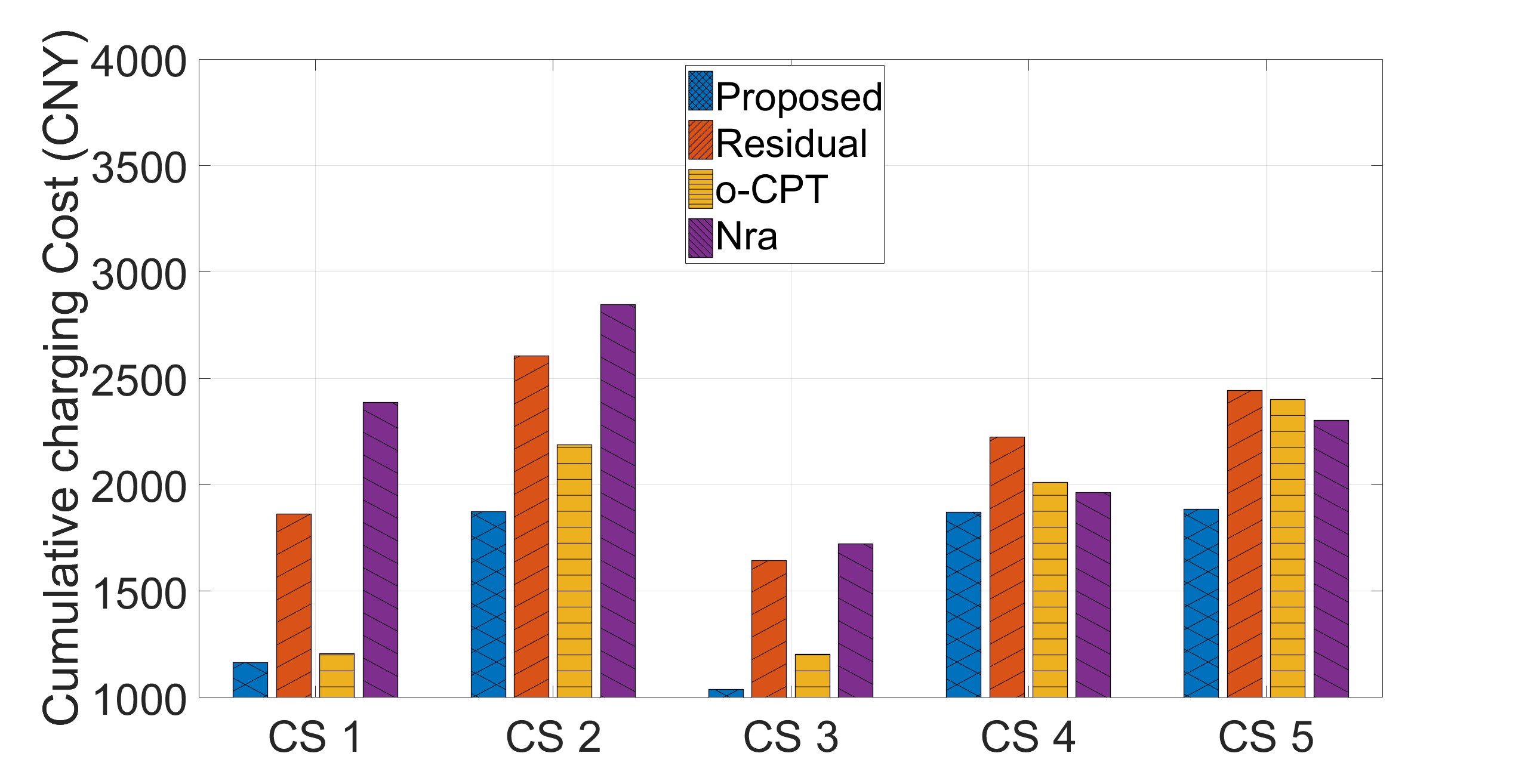}
  \caption{Weekends.}
  \label{fig:cost_cs_weekends}
\end{subfigure}%
\caption{Daily charging cost with different prediction methods.}
\label{fig:cost_cs}
\end{figure*}

As illustrated in Figure~\ref{fig:cost_cs}, accurate forecasting methodologies consistently reduce charging costs across all charging stations, thereby enhancing revenues. For instance, at CS 2, the charging cost predicted using our method is approximately 1,873 CNY, which represents 71\% of the cost incurred using the Residual method (2,605 CNY), 85\% of the cost using the o-CPT method (2,186 CNY), and 65\% of the cost using the Nra method (2,864 CNY).

Furthermore, we observe that the o-CPT method can approximate the effectiveness of our method in scenarios involving charging stations with a limited number of vehicles, such as CS 1. However, our method significantly outperforms the o-CPT method at stations with higher vehicle charging demands, such as CS 5. This observation aligns with our previous analysis, as the o-CPT method optimizes for a single vehicle and does not account for interactions when multiple vehicles make concurrent decisions.

\section{Conclusions and Future Directions}
\label{sec:conclusion}

This paper propose a new prospect theoretical framework to analyze the EV highway charging behaviors. We propose a Bayesian game model where EV drivers seek to minimize costs, including the charging fee, range anxiety, and queuing time. We prove the existence and uniqueness of the Bayesian Nash equilibrium, and propose a search algorithm to solve the equilibrium. Our simulations demonstrate that the range anxiety and the risk aversion tendency have significant impact on the charging decision and charging station status.

As for further directions, one may consider the optimal site selection issues of the highway charging station based on this result. Another interesting topic may be the optimal pricing on the highway charging stations.

\section*{Code Availability}
The complete MATLAB source code implementing the proposed bounded rationality highway charging Bayesian game framework, including all Bayesian Nash Equilibrium solvers (with and without destination distribution knowledge), benchmark methods, and figure-generation scripts, is openly available on GitHub:
\begin{center}
\url{https://github.com/T-Lab-CUHKSZ/-IOTJ25-Incorporating-Bounded-Rationality-into-Electric-Vehicle-Highway-Charging-Decisions}
\end{center}

\appendix
\subsection{Proof of Lemma~\ref{lmm:NE_form} and Theorem~\ref{thm:existence_uniqueness}}
\label{apdx-prof_NE_form}
\begin{proof}
We first prove Lemma~\ref{lmm:NE_form} by analyzing the monotonicity of function $h(SoC)$. Due to $A_i$'s \textcolor{black}{definition} in Equation~\eqref{eq:a_i}, we discuss in three categories based on the relationship of $SoC_i$, $SoC_{d,i}$ and $SoC_{s,i}$.

When $SoC_i < SoC_{d,i}$, $\mathbb{E}[\Pi(x_i=0|SoC_i)] = - \infty$, $h(SoC_i) < 0$.

When $SoC_i > SoC_{s,i}$, $E[\Pi(x_i=0|SoC_i] = 0$,
\begin{align}
    h(SoC_i) = 0 + V_i (SoC_t-{SoC}_i) \cdot P + \mu_{q,i} \cdot \mathbb{E}[t] > 0. \nonumber
\end{align}

Then we discuss the case when $SoC_{d,i} \leq SoC_i \leq SoC_{s,i}$. Taking the derivation, we have
\begin{align}
\small
&\frac{\rm d h(SoC_i)}{\rm d SoC_i} = -\alpha_i \mu_a \lambda_i (SoC_{s,i}-SoC_i)^{\alpha_i-1}-V_iP. \\
&\frac{\rm d^2 h(SoC_i)}{(\rm d SoC_i)^2} = -\alpha_i (\alpha_i-1)\mu_a \lambda_i (SoC_{s,i}-SoC_i)^{\alpha_i-2}.
\end{align}
Since $\alpha_i > 1$,  $\frac{\rm d^2 h(SoC_i)}{(\rm d SoC_i)^2} < 0$. Thus for any $SoC_i$ in $[SoC_{d,i},SoC_{s,i}]$, $\frac{\rm d h(SoC_i)}{\rm d SoC_i} > \frac{\rm d h(SoC_{s,i})}{\rm d SoC_i} = -V_iP $. Function $h(SoC_i)$ is either decreasing in $[SoC_{d,i},SoC_{s,i}]$ (case \normalsize{\textcircled{\scriptsize{1}}}\normalsize, or first increasing then decreasing (case \normalsize{\textcircled{\scriptsize{2}}}\normalsize\enspace or \normalsize{\textcircled{\scriptsize{3}}}\normalsize, depending on the sign of $h(SoC_{d,i})$), as shown in Figure~\ref{fig:h_shape}.

\begin{figure}[ht]
    \centering
    \includegraphics[width=0.8\linewidth]{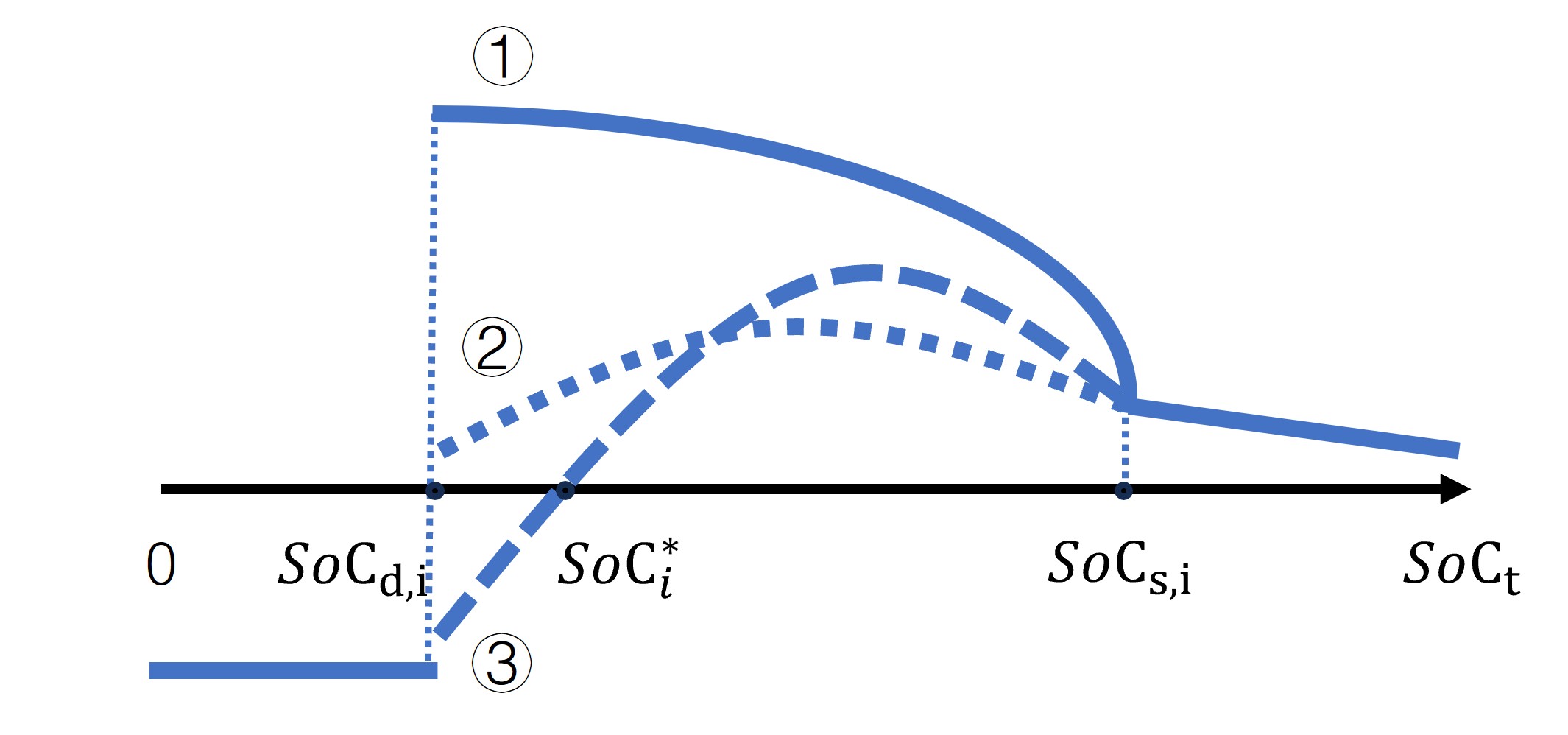}
    \caption{There possible shapes for function $h(SoC)$.}
    \label{fig:h_shape}
\end{figure}

The BNE form in Equation~\eqref{eq:BNE_form} is equivalent to
\begin{align} \label{eq:BNE_form_h}
\small
    h(SoC_i) < 0 \text{ if } SoC_i < \widehat{SoC_i}, \nonumber \\
    h(SoC_i) \geq 0 \text{ if } SoC_i \geq \widehat{SoC_i}.
\end{align}
The function $h(SoC)$ is negative when $SoC_i < SoC_{d,i}$, and is positive when $SoC_i > SoC_{s,i}$. If $h(SoC_{d,i}) \geq 0$ (case \normalsize{\textcircled{\scriptsize{1}}}\normalsize\enspace or \normalsize{\textcircled{\scriptsize{2}}}\normalsize), point $\widehat{SoC_i} = SoC_{d,i}$ is the only point that satisfies condition~\eqref{eq:BNE_form_h}; If $h(SoC_{d,i}) < 0$ (case \normalsize{\textcircled{\scriptsize{3}}}\normalsize), since $h(SoC_i)$ is continuous in $[SoC_{d,i},SoC_{s,i}]$, according to intermediate value theorem, there exist a point $SoC^*_i$ in the increasing interval satisfies $h(SoC^*_i)=0$. Point $\widehat{SoC_i} = SoC^*_i$ is the only point satisfies condition~\eqref{eq:BNE_form_h}. Therefore, in every case, there is a unique value that satisfies Equation~\eqref{eq:BNE_form_h}, which complete the proof of Lemma~\ref{lmm:NE_form}.

Lemma~\ref{lmm:NE_form} also shows that in a system of N players, each player has a unique decision point. In a $N$-player system, each player makes charging decisions based on his own unique decision point. This also implies that there is a unique Bayesian equilibrium in the system, completing the proof of Theorem~\ref{thm:existence_uniqueness}.
\end{proof}

\bibliographystyle{IEEEtran}
\bibliography{ref.bib}

\begin{IEEEbiography}[{\includegraphics[width=1in,height=1.25in,clip,keepaspectratio]{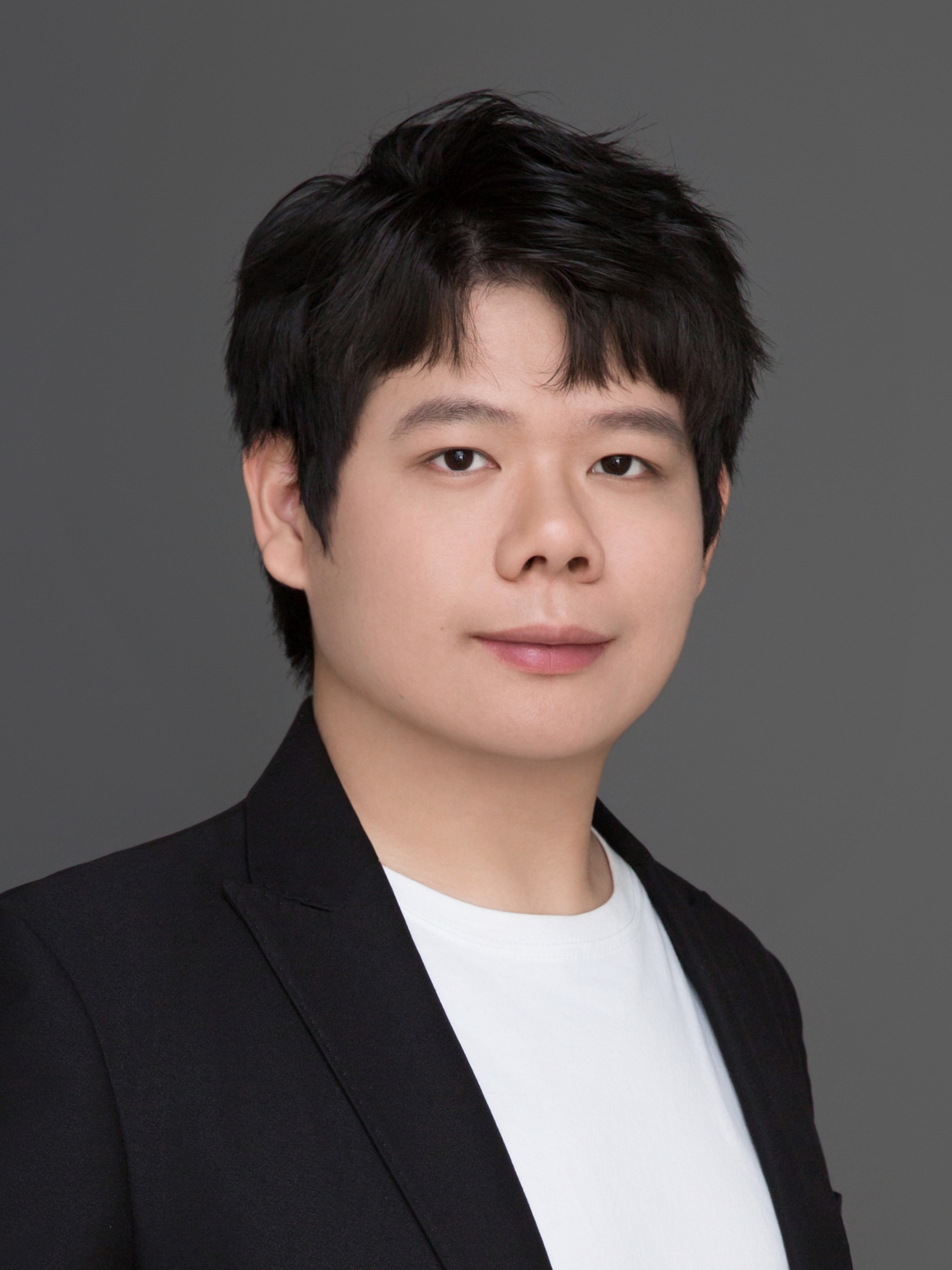}}]{Huanyu Yan}
Huanyu Yan obtained the B.Eng. degree in Computer Science and Technology from the Northwestern Polytechnical University, Shaanxi, China, in 2020. He is currently pursuing the Ph.D. degree in Computer and Information Engineering with the Chinese University of Hong Kong (Shenzhen), Shenzhen, China. His research interests include electric vehicle charging scheduling, charging station pricing and optimizations, applications of game theory and economic mechanisms.
\end{IEEEbiography}

\begin{IEEEbiography}[{\includegraphics[width=1in,height=1.25in,clip,keepaspectratio]{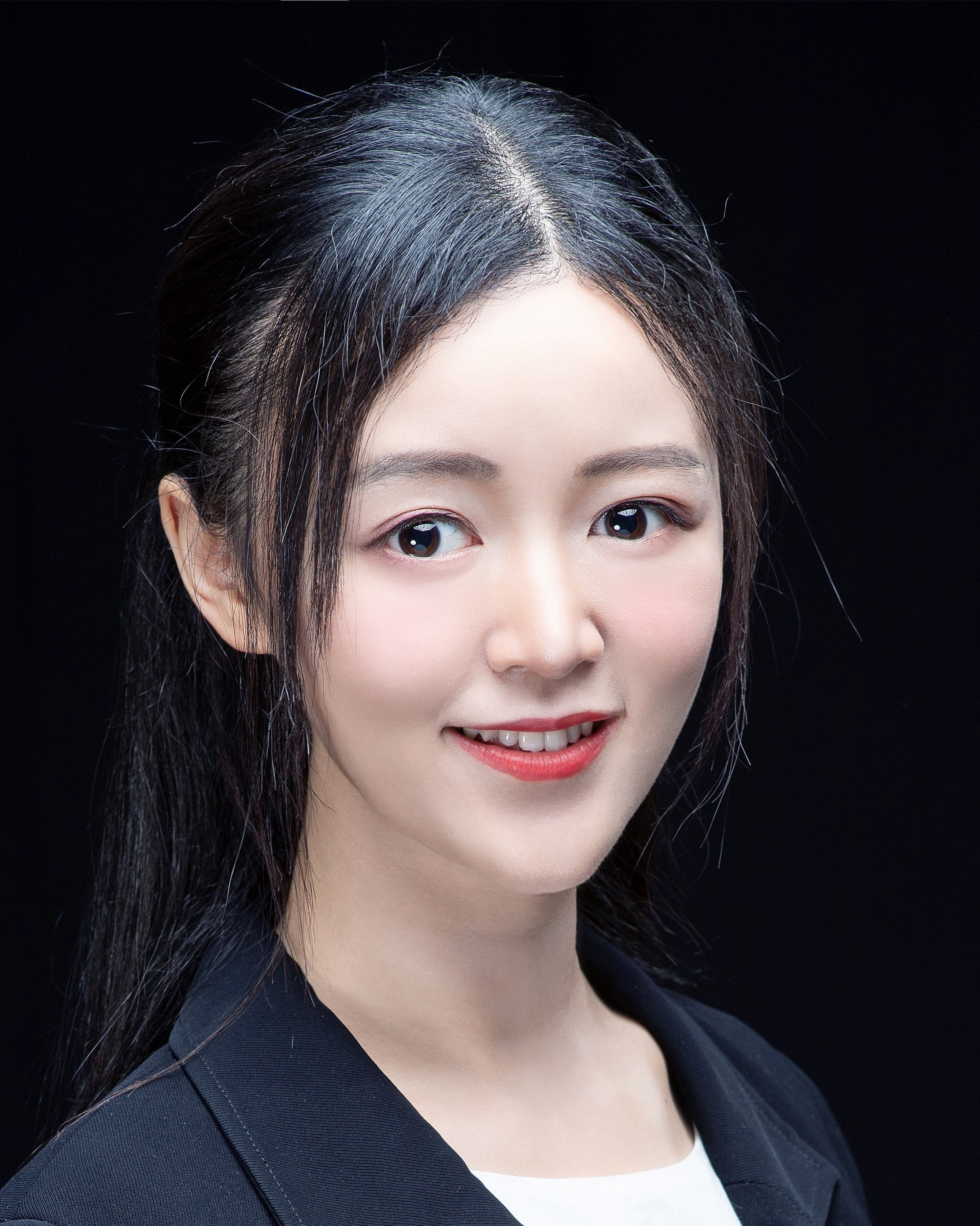}}]{Xiaoying Tang}
Xiaoying Tang (S'13-M'16) is currently a tenure-track assistant professor in The Chinese University of Hong Kong (Shenzhen). Before that, she worked as a Research Scientist at EPFL and CUHK during 2016-2019, respectively. She received her Ph.D. degree from CUHK in 2016. Her current research interests include federated learning, LLM, optimization algorithm design for EV charging and other information networks. Dr. Tang received the Best Paper Award of IEEE SmartGridComm 2013 and IEEE SmartGridComm 2024, Excellent Poster Presentation Award of IEEE ICCT 2024 and Distinguished Paper Award of IEEE/ACM ASE 2023. She has a used name: Wanrong Tang.
\end{IEEEbiography}

\end{document}